\numberwithin{equation}{section}
\theoremstyle{definition}
\newtheorem{thm}{Theorem}[section]
\newtheorem{defn}[thm]{Definition}
\newtheorem{ex}[thm]{Example}
\newtheorem{lem}[thm]{Lemma}
\newtheorem{prop}[thm]{Proposition}
\newtheorem{rem}[thm]{Remark}
\newcommand{\R}{\mathbb{R}}   
\newcommand{\C}{\mathbb{C}}   
\newcommand{\N}{\mathbb{N}}   
\newcommand{\Z}{\mathbb{Z}}
\newcommand{\calC}{{\mathcal{C}}} 
\newcommand{\calD}{{\mathcal{D}}}
\newcommand{\calF}{{\mathcal{F}}}
\newcommand{\calM}{{\mathcal{M}}} 
\newcommand{\calN}{{\mathcal{N}}}
\newcommand{\dd}{{\rm d}}
\newcommand{\bn}{{\bm{n}}}
\newcommand{\bx}{{\bm{x}}}
\newcommand{\btau}{\boldsymbol \tau}
\title{Spectral analysis of Dirac operators for fermion scattering on topological solitons in the nonlinear $O(3)$ $\sigma$-model}
\author{Daiju Funakawa, Satoshi Okumura and Yuki Ueda}
\date{}
\begin{document}

\maketitle

\begin{abstract}
We investigate the existence of discrete positive or negative energy ground states of the Dirac operator $H$ which describe the fermion scattering on topological solitons in the nonlinear $O(3)$ $\sigma$-model. Additionally, we provide a sufficient condition to ensure that the positive and negative energies of the Dirac operator $H$ are non-zero.
\end{abstract}



\section{Introduction}

Topological solitons of nonlinear field models play an important role in field theory, condensed matter physics and hydrodynamics etc. (see e.g. \cite{NT13, PS19}). In this paper, we study the Dirac operators for fermion scattering on topological solitons in the nonlinear $O(3)$ $\sigma$-model introduced by \cite{L21}. 

Let $\sigma_j \in M_2(\C)$ be the {\it Pauli matrices} for $j=1,2,3$, that is,
\[
\sigma_1=\left(\begin{array}{ccc}
      0 & 1 \\
      1 & 0 
    \end{array}\right), \qquad 
    \sigma_2=\left(\begin{array}{ccc}
      0 & -i \\
      i & 0 
    \end{array}\right), \qquad
    \sigma_3=\left(\begin{array}{ccc}
      1 & 0 \\
      0 & -1 
    \end{array}\right).
\]
The {\it spin matrices} are defined by $\gamma_j= (i\sigma_j ) \otimes 1_2$ for $j=1,2$ and $\gamma_3= (-\sigma_3 ) \otimes 1_2$, where $1_d$ is the $d\times d$ identity matrix. It is easy to verify that
\begin{itemize}
\item $\sigma_j^2=1_2$ for $j=1,2,3$ and also $\sigma_1\sigma_2=i\sigma_3$, $\sigma_2\sigma_3=i\sigma_1$ and  $\sigma_3\sigma_1=i\sigma_2$,
\item $\gamma_j^2 =1_4$ for $j=1,2,3$ and also $\gamma_1\gamma_2=i\gamma_3$, $\gamma_2\gamma_3=-i \gamma_1$ and $\gamma_3\gamma_1=-i\gamma_2$,
\item $\{\sigma_j, \sigma_k\}=0_2$ and $\{\gamma_j,\gamma_k\}=0_4$ for all $1\le j\neq k \le 3$,
\end{itemize}
where $\{a,b\}=ab+ba$ and $0_d$ is the $d\times d$ zero matrix.

Denote by $\calN$ the set of all vector fields $\bn=(n_1,n_2, n_3)$ such that $n_1,n_2,n_3$ are functions from $\R^2$ to $\R$ satisfying
\begin{enumerate}
\item[(n-1)] $n_j$ is continuously differentiable a.e. in $\R^2$ for $j=1,2,3$,
\item[(n-2)] $\|\bn(\bx)\|_{\R^3}=1$ a.e. $\bx \in \R^2$,
\item[(n-3)] $D_k n_j$ is bounded a.e. in $\R^2$ for all $j=1,2,3$ and $k=1,2$,
\item[(n-4)] $\|\nabla n_j(\bx)\|_{\R^2} \to 0$ as $\|\bx\|_{\R^2}\rightarrow \infty$ for $j=1,2,3$.
\end{enumerate}
Here we define the above notations as follows.
\begin{itemize}
\item $\|\cdot\|_{\R^n}$ is the standard Euclidian distance on $\R^n$ for $n\in \N$.
\item $D_k$ is the generalized partial differential operator in the variable $x_k$, acting in $L^2(\R^2)$ with domain
$$
\mathcal{D}(D_k):=\mathcal{D}\left(\overline{\frac{\partial}{\partial x_k}}\right), \quad k=1,2,
$$
where $\overline{T}$ is the closure of a closable operator $T$.
\item  $\nabla=(D_1,D_2)$ is the (generalized) gradient.
\end{itemize}
Due to condition (n-2), which asserts that $\text{esssup}_{\bx\in \R^2}|n_j(\bx)|\le 1$, the function $n_j$ can be considered a bounded multiplication operator on $L^2(\R^2)$. Additionally, in accordance with condition (n-3), the operator $D_k n_j$ can also be viewed as a bounded multiplication operator on $L^2(\R^2)$ for $j=1,2,3$ and $k=1,2$.

\begin{defn} 
Let us consider $m>0$, $\bn=(n_1,n_2,n_3)\in \calN$ and $\btau= (\tau_1,\tau_2,\tau_3):=(\sigma_1,\sigma_2,\sigma_3)$. A Dirac operator for {\it fermion scattering on topological solitons in the nonlinear $O(3)$ $\sigma$-model} (for short, {\it fermionic Dirac operator}) is the linear operator
\begin{align*}
 H_{\rm ferm} \equiv  H_{\rm ferm}(\bn):=-i \sum_{k=1}^2 \gamma_3 \gamma_k D_k \otimes 1_2 + m \sum_{j=1}^3 \gamma_3 n_j\otimes \tau_j,  
\end{align*}
acting in $L^2(\R^2;\C^4)\otimes \C^2$ with domain $\mathcal{D}(H_{\rm ferm})=\bigcap_{k=1}^2 (\mathcal{D}(D_k) \otimes \C^4) \otimes \C^2$.
\end{defn}
By the natural identification $L^2(\R^2;\C^4)\otimes \C^2 \cong \oplus_{i=1}^2 \left(L^2(\R^2;\C^2)\otimes \C^2\right)$, one can see that $H_{\rm ferm}$ has the matrix representation $H_{\rm ferm} = H\oplus H$, where 
\begin{align}\label{defn:OurDirac}
H\equiv H(\bn):= (-i\sigma_2  D_1 +i\sigma_1D_2) \otimes 1_2 - m \sum_{j=1}^3 \sigma_3 n_j \otimes \tau_j,
\end{align} 
acting in $L^2(\R^2;\C^2)\otimes \C^2$ with domain $\mathcal{D}(H)=\bigcap_{k=1}^2 (\mathcal{D}(D_k)\otimes \C^2) \otimes \C^2$. 
Since $H_{\rm ferm}$ is diagonal, we have $\sigma(H_{\rm ferm})=\sigma(H)$. Thus, it is enough to consider $H$ to study spectral properties of $H_{\rm ferm}$.

Denote by $\mathcal{C}=\{(r,\varphi): r>0,\ 0\le \varphi <2\pi\}$ the polar coordinate system and $\Phi:\mathcal{C}\rightarrow \R^2$ the change of coordinates. The unitary operator $U:L^2(\R^2)\rightarrow L^2(\mathcal{C}, r\dd r\dd\varphi )$ defined by $Uf = f\circ \Phi$ plays the role of the change of coordinates for operators. Then we get 
\begin{align}\label{eq:unitaryeqiv}
(U\otimes 1_2) H(\bn) (U\otimes 1_2)^\ast = (-i\sigma_2 \widetilde{D}_1 + i\sigma_1 \widetilde{D}_2) \otimes 1_2 - m \sum_{j=1}^3 \sigma_3 \widetilde{n}_j \otimes \tau_j,
\end{align}
where $\widetilde{n}_j=Un_jU^{-1}$ for $j=1,2,3$,
\begin{align*}
\widetilde{D}_1=\cos \varphi \cdot \partial_r- \frac{\sin\varphi}{r}\cdot \partial_\varphi,\qquad  \text{and} \qquad \widetilde{D}_2=\sin \varphi \cdot \partial_r + \frac{\cos\varphi}{r}\cdot \partial_\varphi,
\end{align*}
see Sections 3.1 and 3.2 for more details. Denote by $\widetilde{H}(\bn)$ the operator in the right hand side of $\eqref{eq:unitaryeqiv}$. The operators $H(\bn)$ and $\widetilde{H}(\bn)$ are unitarily equivalent and consequently have the same spectrum, that is, $\sigma(H(\bn))=\sigma(\widetilde{H}(\bn))$, $\sigma_{\rm d}(H(\bn))=\sigma_{\rm d}(\widetilde{H}(\bn))$ and $\sigma_{\rm ess}(H(\bn))=\sigma_{\rm ess}(\widetilde{H}(\bn))$, where $\sigma_{\rm d}(T)$ and $\sigma_{\rm ess}(T):=\sigma(T)\setminus \sigma_{\rm d}(T)$ are {\it discrete} and {\it essential spectrum} of a self-adjoint operator $T$, respectively.
Especially, we are interested in the Dirac operator $\widetilde{H}(\bn)$ when $\bn$ is a {\it hedgehog ansatz} $\bn(F)$, that is, for $\omega=(r,\varphi)\in \mathcal{C}$,
\begin{align*}
\widetilde{n}_1(\omega)=\sin F(r)\cos (N\varphi), \quad  \widetilde{n}_2(\omega)=\sin F(r)\sin (N\varphi), \quad \widetilde{n}_3(\omega)=\cos F(r),
\end{align*}
where $N\in \Z$ and $F: (0,\infty) \rightarrow [0,\pi]$ is a function satisfying the conditions (F-1)-(F-3) in Section 3.2. The function $F$ is called a {\it profile function}. One can see $\bn(F) \in \mathcal{N}$ in Lemma \ref{lem:HA}.

\begin{defn}
\begin{enumerate}
\item[\rm (i)] For a self-adjoint operator $T$ bounded from below, we define
$$
E_0(T):=\inf \sigma(T)
$$
 as the {\it lowest energy} of $T$. A non-zero vector in $\ker(T-E_0(T))$ is called a {\it discrete ground state of $T$} if $E_0(T) \in \sigma_{\rm d}(T)$. 
 
\item[\rm (ii)] For a self-adjoint operator $T$, we define
$$
E_0^+(T) := \inf (\sigma(T)\cap [0,\infty)) \quad \text{and} \quad E_0^-(T) := \sup (\sigma(T)\cap (-\infty,0])
$$
as the {\it positive} and {\it negative energy} of $T$, respectively. A non-zero vector in $\ker(T- E_0^+(T))$ (resp. $\ker(T- E_0^-(T))$)  is called a {\it discrete positive (resp. negative) energy ground state of $T$} if $E_0^+(T) \in \sigma_{\rm d}(T)$  (resp. $E_0^-(T) \in \sigma_{\rm d}(T)$).
\end{enumerate}
\end{defn}

\begin{rem}\label{rem1}
We understand $E_0^+(T) = \infty$ (resp. $E_0^-(T)=-\infty)$ when $\sigma(T)\cap[0,\infty)=\emptyset$ (resp. $\sigma(T)\cap (-\infty,0] =\emptyset$). If $\sigma(T)\neq \emptyset$, then either $E_0^+(T) \in [0,\infty)$ or $E_0^-(T) \in (-\infty,0]$ holds.
\end{rem}

In \cite{A06, AHS05}, they discussed the existence of discrete ground states in the context of the chiral quark soliton model. In this paper, we establish a sufficient condition for the existence of discrete positive or negative energy ground states of the fermionic Dirac operator $H(\bn(F))$. To derive the sufficient condition, our attention is directed towards the reduced component $H(\ell,s,t)$ of the fermionic Dirac operator $H(\bn(F))$. Specifically, we focus on its action within an eigenspace of the grand spin operator $K_3$, characterized by an eigenvalue of the form $\ell + \frac{s}{2} + \frac{Nt}{2}$, where $\ell$ is an integer, and $s$ and $t$ take values of $\pm 1$. Finally, our first main theorem is established. 

\begin{thm}
\label{main1}
Let us consider a profile function $F$. If there exist $\ell \in \Z$ and $t\in \{1,-1\}$ such that
\[
\inf_{\substack{f\in C_0^\infty((0,\infty))\\ \|f\|_2=1}} \left\langle f, \left(-\frac{\dd^2}{\dd r^2}-\frac{1}{r}\frac{\dd}{\dd r} +\frac{\ell^2+(\ell+1)^2}{2r^2}  + mt \frac{\dd F}{\dd r} \sin F\right)f \right\rangle_2<0,
\]
then $H(\bn(F))$ has a discrete positive or negative energy ground state, where $C_0^\infty((0,\infty))$ is the set of all compactly supported $C^\infty$-functions on $(0,\infty)$ and
$$
\langle u, v \rangle_2 := \int_{(0,\infty)} \overline{u(r)}v(r) r \dd r  \quad \text{and} \quad \|u\|_2:= \sqrt{\langle u, u\rangle_2}
$$
for all $u,v\in L^2((0,\infty), r\dd r)$.
\end{thm}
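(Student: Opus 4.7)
The plan is to study $H(\bn(F))^2$ via the min--max principle on a suitable trial state. A direct computation using $\{\sigma_j,\sigma_k\} = 0_2$ for $j\ne k$, $(\bn\cdot\btau)^2 = 1$, and the Leibniz rule for $D_k$ yields
\begin{equation*}
H(\bn)^2 = -\Delta\otimes 1_4 + m^2 - m\bigl[\sigma_1\otimes(D_1\bn\cdot\btau) + \sigma_2\otimes(D_2\bn\cdot\btau)\bigr].
\end{equation*}
Setting $V := -m[\sigma_1\otimes(D_1\bn\cdot\btau) + \sigma_2\otimes(D_2\bn\cdot\btau)]$, condition (n-4) makes $V\to 0$ at infinity, so by Weyl's theorem $\sigma_{\rm ess}(H^2) = [m^2,\infty)$. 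Consequently any $\lambda\in\sigma(H)\cap(-m,m)$ satisfies $\lambda^2\in\sigma_d(H^2)$, which forces $\lambda\in\sigma_d(H)$: the gap $(-m,m)$ contains only discrete spectrum.

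The core of the proof is to exhibit a trial vector on which $H^2 - m^2$ has negative expectation. Using the polar form \eqref{eq:unitaryeqiv} and the grand-spin decomposition discussed in Section 3, I would take $\Psi = f(r)\,\Xi_{\ell,t}(\varphi)$ with $f\in C_0^\infty((0,\infty))$, where $\Xi_{\ell,t}$ is a normalized angular/spinor profile built from two $K_3$-eigenspaces (labeled by $s=\pm 1$) whose two radial components carry orbital indices $\ell$ and $\ell+1$. Because the same radial function $f$ is used in both components, the centrifugal terms $\ell^2/r^2$ and $(\ell+1)^2/r^2$ arising from $-\Delta$ combine into the average $(\ell^2+(\ell+1)^2)/(2r^2)$.

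For the cross term I would rewrite $\sigma_1 D_1 + \sigma_2 D_2$ in polar coordinates as $(\sigma_1\cos\varphi+\sigma_2\sin\varphi)\partial_r + r^{-1}(-\sigma_1\sin\varphi+\sigma_2\cos\varphi)\partial_\varphi$, and expand $\partial_r(\bn(F)\cdot\btau)$ and $\partial_\varphi(\bn(F)\cdot\btau)$ using the hedgehog ansatz. A careful handling of the angular and spinor integrations eliminates the $\tau_1,\tau_2$-valued contributions from both $\partial_r\bn$ and $\partial_\varphi\bn$, leaving only the $\tau_3$-valued piece of $\partial_r\bn$, which yields the scalar potential $mt\,F'(r)\sin F(r)$ (the sign $t$ being selected by the spinor profile). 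Combining these ingredients, $\langle\Psi,(H^2-m^2)\Psi\rangle$ becomes
\begin{equation*}
\|\Xi_{\ell,t}\|^2 \int_0^\infty \overline{f(r)}\Bigl(-\frac{\dd^2}{\dd r^2} - \frac{1}{r}\frac{\dd}{\dd r} + \frac{\ell^2+(\ell+1)^2}{2r^2} + mt\frac{\dd F}{\dd r}\sin F\Bigr)f(r)\, r\,\dd r.
\end{equation*}

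Under the hypothesis this is strictly negative for some $f$ with $\|f\|_2=1$, so by min--max $\inf\sigma(H^2) < m^2$, forcing $\sigma_d(H)\cap(-m,m)\ne\emptyset$. Any such eigenvalue $\lambda$ is either nonnegative, giving $E_0^+(H)\le\lambda<m$ as a discrete positive-energy ground state, or negative, giving $E_0^-(H)\ge\lambda>-m$ as a discrete negative-energy ground state. The principal obstacle is the cross-term bookkeeping in the third paragraph: one must identify the correct $\Xi_{\ell,t}$ compatible with the grand-spin reduction so that the $\tau_1,\tau_2$ contributions integrate to zero and only $mt\,F'\sin F$ survives. Once the spinor structure is correctly arranged, the remaining manipulations reduce to routine radial integrations.
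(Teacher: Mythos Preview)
Your proposal is correct and follows essentially the same route as the paper. Your trial vector $\Psi=f(r)\Xi_{\ell,t}(\varphi)$, built from the two $K_3$-components with $s=\pm1$ and orbital indices $\ell,\ell+1$ sharing the same radial profile $f$, is exactly the paper's $\Psi_f^{(\ell,\ell+1,t)}=\tfrac{1}{\sqrt2}(\Psi_{f,+}^{(\ell,t)}+\Psi_{f,-}^{(\ell+1,t)})$; the mechanism you describe for killing the $\tau_1,\tau_2$ pieces (namely $\langle u_t,\tau_j u_t\rangle=0$ for $j=1,2$) and retaining only the $\tau_3$ part of $\partial_r\bn$, together with the $e^{i\varphi}$ angular factor forcing $\ell_2=\ell_1+1$, is precisely the content of the paper's Lemma~\ref{lem:Psi_f}. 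The only noticeable difference is the endgame: the paper uses Lemma~\ref{lem:lowest_energy} ($E_0(H^2)\ge\min\{E_0^+,|E_0^-|\}^2$) and a contradiction with $\sigma_{\rm ess}(H^2)=[m^2,\infty)$, whereas you argue directly that $E_0(H^2)<m^2$ produces some $\lambda\in\sigma(H)\cap(-m,m)=\sigma_{\rm d}(H)$, and then $E_0^+(H)$ (or $E_0^-(H)$) is trapped in $[0,m)$ (resp.\ $(-m,0)$) and hence discrete since $\sigma(H)\cap[0,\infty)$ is closed---this is a slightly cleaner variant of the same idea.
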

Our strategy of proof is to use the min-max principle as well as \cite{A06, AHS05}. However, the choice of vectors in the eigenspace of the grand spin operator $K_3$ is quite different from the previous works, see Lemma \ref{lem:Psi_f} for details. 

Next, we investigate a sufficient condition to ensure that the positive and negative energies of $H(\bn(F))$ are non-zero.

\begin{thm}\label{main2}
Let $\bn=\bn(F)$ be a hedgehog ansatz with $\limsup_{r\to 0^+}\left|\frac{\sin F(r)}{r}\right|<\infty$. If the assumption of Theorem \ref{main1} and
\[
m>\sup_{r>0} \sqrt{ \left|F'(r)^2 -\left(\frac{N}{r} \sin F(r)\right)^2 \right|}
\]
hold, then the positive and negative energies of $H=H(\bn(F))$ are non-zero, that is, $E_0^\pm (H)\neq 0$.
\end{thm}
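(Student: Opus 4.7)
Since $\sigma(H)\cap[0,\infty)$ is closed in $[0,\infty)$, its infimum is attained whenever the set is non-empty, so $E_0^+(H)=0$ if and only if $0\in\sigma(H)$; the analogous equivalence holds for $E_0^-(H)$. Combining with Remark \ref{rem1}, the condition $E_0^\pm(H)\neq 0$ is equivalent to $0\notin\sigma(H)$. Writing $\sigma(H)=\sigma_{\rm ess}(H)\cup\sigma_{\rm p}(H)$, it suffices to show both (i) $0\notin\sigma_{\rm ess}(H)$ and (ii) $\ker H=\{0\}$.

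For (i), the profile-function hypotheses combined with (n-4) force $\bn(\bx)\to\bn_\infty$ uniformly as $\|\bx\|_{\R^2}\to\infty$ for some constant $\bn_\infty\in S^2$. The difference $H(\bn)-H(\bn_\infty)$ is then multiplication by a matrix-valued function that vanishes at infinity, hence a relatively compact perturbation of the free Dirac operator $H(\bn_\infty)$. By Weyl's theorem and a Fourier-transform computation, $\sigma_{\rm ess}(H)=\sigma_{\rm ess}(H(\bn_\infty))=(-\infty,-m]\cup[m,\infty)$, which omits $0$.

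For (ii), self-adjointness gives $\ker H=\ker H^2$. A direct computation yields
\[
H^2=-\Delta\otimes 1_4+m^2-mV,\qquad V=\sigma_1\otimes(D_1\bn)\cdot\btau+\sigma_2\otimes(D_2\bn)\cdot\btau,
\]
and for the hedgehog ansatz in polar coordinates, $V=F'(r)\,M_1+(N\sin F(r)/r)\,M_2$, where $M_1,M_2$ are commuting Hermitian involutions; their joint diagonalization gives the pointwise eigenvalues $(F'\pm N\sin F/r)^2$ for $V^2$. I would next exploit the grand-spin decomposition $H=\bigoplus_{\ell,s,t}H(\ell,s,t)$ (already central to the proof of Theorem \ref{main1}) and compute $H(\ell,s,t)^2$ sector by sector. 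The expected structure is that, on each sector, $H(\ell,s,t)^2$ is the sum of $m^2$, a non-negative radial Schrödinger part, and a bounded $2\times 2$ matrix multiplication $R_{\ell,s,t}(r)$ whose pointwise eigenvalues lie in $[-|F'^2-(N\sin F/r)^2|,\,|F'^2-(N\sin F/r)^2|]$. The gap hypothesis $m^2>\sup_{r>0}|F'^2-(N\sin F/r)^2|$ then yields $H(\ell,s,t)^2\geq m^2-\sup_{r>0}|F'^2-(N\sin F/r)^2|>0$ uniformly in $(\ell,s,t)$, whence $\ker H=\{0\}$.

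The principal obstacle is step (ii): establishing that the $V$-piece of $H(\ell,s,t)^2$ admits the refined pointwise bound $|F'^2-(N\sin F/r)^2|$ rather than the naive $\|V\|_\infty^2=(|F'|+|N\sin F/r|)^2$, which would not suffice since the former is strictly smaller and is exactly what matches the theorem's hypothesis. This requires careful tracking of signs and cancellations among the off-diagonal matrix coefficients in each grand-spin sector, and it is where the specific geometry of the hedgehog ansatz (in particular $M_1 M_2=M_2 M_1$) is used in a crucial way. The auxiliary hypothesis $\limsup_{r\to 0^+}|\sin F(r)/r|<\infty$ is used to keep the matrix coefficients of $H(\ell,s,t)^2$ bounded at the origin, ensuring that the Schr\"odinger-type lower bound is valid on $C_0^\infty((0,\infty))$ and extends to the natural domain of $H(\ell,s,t)^2$.
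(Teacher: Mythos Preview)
Your overall reduction is sound, but the paper's route is considerably shorter and does not touch the grand-spin decomposition at all. The paper simply bounds
\[
E_0(H^2)\;\ge\; m^2+\inf_{\|\Phi\|=1}\langle\Phi,V\Phi\rangle\;\ge\; m^2-\|V\|,
\]
invokes the off-diagonal block structure $V=\begin{pmatrix}0&W^*\\ W&0\end{pmatrix}$ (Lemma~\ref{rem:Vdiagonal}) together with the elementary fact $\|V\|=\|W\|$ (Lemma~\ref{lem:spec_V}), and then asserts that the explicit $2\times 2$ multiplication operator $\widetilde W$ satisfies $\|\widetilde W\|=m\sup_{r>0}\sqrt{|F'(r)^2-(N\sin F(r)/r)^2|}$. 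The gap hypothesis then gives $E_0(H^2)>0$, and Lemma~\ref{lem:MINENERGY} (this is where the hypothesis of Theorem~\ref{main1} enters) converts this into $\min\{E_0^+(H),|E_0^-(H)|\}^2>0$. So the ``refined'' quantity you flag as the principal obstacle is, in the paper, obtained as the global operator norm of $V$ itself---no sector-by-sector bookkeeping.

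Two further points. Your step~(i) argument that $\bn(\bx)\to\bn_\infty$ is not justified by (F-1)--(F-3): the condition $F'(r)\to 0$ does not force $F(r)$ to have a limit at infinity, so the relatively-compact-perturbation argument as written has a hole. This is harmless, however, since Proposition~\ref{spectrumH} already gives $\sigma_{\rm d}(H)=\sigma(H)\cap(-m,m)$, so $0$ can lie in $\sigma(H)$ only as an eigenvalue. Second, note that your argument---if completed---never actually uses the assumption of Theorem~\ref{main1}: once $E_0(H^2)>0$ one has $0\notin\sigma(H^2)$, hence $0\notin\sigma(H)$, hence $E_0^\pm(H)\neq 0$ directly. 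The paper routes through Lemma~\ref{lem:MINENERGY} instead, which is where that extra hypothesis is consumed.
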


The present paper is organized as follows. In Section 2, we investigate the discrete and essential spectrum of the fermionic Dirac operator $H(\bn)$. In Section 3.1, we explain differentiable operators under the polar coordinates. In Sections 3.2 and 3.3, we introduce and study the fermionic Dirac operator with a hedgehog ansatz. In Sections 4 and 5, we rigorously prove the main theorems (i.e. Theorems \ref{main1} and \ref{main2}) and provide an example (see Example \ref{ex:SUSY}).


\section{Discrete and essential spectrum}

In this section, we write $H$ as the operator $H(\bn)$ defined in \eqref{defn:OurDirac}. We explore the spectral properties of the fermionic Dirac operator $H$. We define the operator
$$
H_0:=(-i\sigma_2 D_1 +i\sigma_1D_2) \otimes 1_2
$$ 
acting in $L^2(\R^2;\C^2)\otimes \C^2$ with domain $\mathcal{D}(H_0)=\bigcap_{k=1}^2 (\mathcal{D}(D_k) \otimes \C^2)\otimes \C^2$. The operator $H_0$ is called the {\it spin-Dirac operator}. It is known that $H_0$ is a self-adjoint operator with $\mathcal{D}(H_0)$. Since the operator $m \sum_{j=1}^3 \sigma_3 n_j \otimes \tau_j$ is bounded and self-adjoint on $L^2(\R^2;\C^2)\otimes \C^2$, it follows from the Kato-Rellich theorem (see e.g. \cite[Theorem X.12]{RS75}) that $H=H_0- m \sum_{j=1}^3 \sigma_3 n_j \otimes \tau_j$ is self-adjoint with $\mathcal{D}(H)$. Therefore $\sigma(H)$ is contained in $\mathbb{R}$. 

In order to study the spectrum of $H$, we compute $H^2$ as follows.

\begin{lem}\label{lem:H^2}
For any $\bn =(n_1,n_2,n_3)\in \calN$, we have
$$
H^2 = (-\Delta +m^2)1_2\otimes 1_2 +V \quad \text{on} \quad \mathcal{D}(H^2),
$$
where $-\Delta:=-D_1^2 -D_2^2$ denotes the $2$-dimensional Laplacian with domain $\mathcal{D}(-\Delta) = \bigcap_{k=1}^2 \mathcal{D}(D_k^2)$ and
$$
V\equiv V(\bn):=  - m  \sum_{j=1}^3 \left(\sum_{k=1}^2 \sigma_k (D_k n_j)\right) \otimes \tau_j
$$
with domain $L^2(\R^2;\C^2)\otimes \C^2$. Moreover, $V$ is bounded, and therefore $\mathcal{D}(H^2)= \mathcal{D}(-\Delta)$.
\end{lem}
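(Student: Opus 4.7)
The plan is to write $H = H_0 - M$ with $M := m\sum_{j=1}^3 \sigma_3 n_j\otimes\tau_j$, expand $H^2 = H_0^2 - (H_0 M + M H_0) + M^2$ first on a convenient core (for instance, vectors whose $L^2(\R^2)$-components lie in $C_0^\infty(\R^2)$), and then extend the identity to $\mathcal{D}(H^2)$ using the self-adjointness of $H$ together with the boundedness of $V$.

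The three pieces I would treat separately. First, $H_0^2$: squaring $(-i\sigma_2 D_1+i\sigma_1 D_2)$ produces two diagonal terms $-\sigma_1^2 D_2^2-\sigma_2^2 D_1^2 = -\Delta$ and two cross terms $\sigma_2\sigma_1 D_1 D_2+\sigma_1\sigma_2 D_2 D_1$ which cancel because $\{\sigma_1,\sigma_2\}=0$ and the $D_k$ commute on the core; tensoring with $1_2$ yields $H_0^2=(-\Delta)1_2\otimes 1_2$. Second, $M^2 = m^2\sum_{j,k} n_j n_k\,1_2\otimes\tau_j\tau_k$; the diagonal sum gives $m^2(\sum_j n_j^2)1_2\otimes 1_2 = m^2\,1_2\otimes 1_2$ by (n-2), while the off-diagonal pairs cancel via $\{\tau_j,\tau_k\}=0$ for $j\ne k$. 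Third, for $H_0 M+M H_0$ I would apply the Leibniz rule $D_k(n_j f) = (D_k n_j)f + n_j D_k f$. The terms in which a derivative falls on $f$ assemble into $(-i\sigma_2 n_j D_1+i\sigma_1 n_j D_2)\sigma_3 + \sigma_3 n_j(-i\sigma_2 D_1+i\sigma_1 D_2)$, and thanks to $\{\sigma_k,\sigma_3\}=0$ for $k=1,2$ every such contribution vanishes. What remains is $m\sum_j(-i\sigma_2\sigma_3(D_1 n_j)+i\sigma_1\sigma_3(D_2 n_j))\otimes\tau_j$, and using $\sigma_2\sigma_3=i\sigma_1$, $\sigma_1\sigma_3=-i\sigma_2$ this simplifies to $m\sum_j\sum_k\sigma_k(D_k n_j)\otimes\tau_j = -V$.

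Combining the three gives $H^2 = (-\Delta+m^2)1_2\otimes 1_2 + V$ on the core. Since each $D_k n_j$ is a bounded multiplication operator by (n-3) and $\sigma_k\otimes\tau_j$ has operator norm $1$, $V$ is bounded on the whole space with $\|V\|\le 2m\sum_{j,k}\|D_k n_j\|_\infty$. Thus $(-\Delta+m^2)1_2\otimes 1_2 + V$ is a closed operator on $\mathcal{D}(-\Delta)$, and since $H$ is self-adjoint (Kato–Rellich, as noted), $H^2$ is self-adjoint on $\mathcal{D}(H^2)$ and agrees with this closed operator on the core; I would conclude by a standard core/closure argument that both operators coincide and that $\mathcal{D}(H^2)=\mathcal{D}(-\Delta)$.

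The main obstacle is not the algebra of the Pauli/spin matrices, which is essentially bookkeeping, but rather making the Leibniz-rule step rigorous when $n_j$ is only a.e.\ differentiable with bounded $D_k n_j$. Concretely, one must justify that $D_k(n_j f)=(D_k n_j)f + n_j D_k f$ as an equality of distributions (or as elements of $L^2$) for $f$ in a suitable core and then close up; this is where conditions (n-1) and (n-3) are essential.
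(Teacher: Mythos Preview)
Your proposal is correct and follows essentially the same route as the paper: both expand $H^2$ into the square of the kinetic part, the square of the potential part, and the anticommutator, then use the Pauli relations $\{\sigma_k,\sigma_3\}=0$, $\sigma_2\sigma_3=i\sigma_1$, $\sigma_1\sigma_3=-i\sigma_2$, condition (n-2) for $M^2=m^2$, and (n-3) for the boundedness of $V$. If anything, your treatment is more careful about domains---the paper performs the computation directly on $\mathcal{D}(H^2)$ without explicitly invoking a core or the Leibniz rule---so the ``obstacle'' you flag is one the paper simply does not address.
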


\begin{proof}
On the domain $\mathcal{D}(H^2)$, a direct computation shows
\begin{align*}
H^2 &= (i\sigma_2 D_1 \otimes 1_2)^2+  (i\sigma_1 D_2 \otimes 1_2)^2+ m^2 \left( \sum_{j=1}^3 \sigma_3n_j \otimes \tau_j\right)^2 \\
&\quad +  \underbrace{\{\sigma_2D_1, \sigma_1D_2\}}_{=0} \otimes 1_2 + im \sum_{j=1}^3 \{\sigma_2 D_1, \sigma_3 n_j\}\otimes \tau_j - im \sum_{j=1}^3 \{\sigma_1D_2, \sigma_3 n_j\}\otimes \tau_j\\
&=(-\Delta + m^2)1_2\otimes 1_2 + im \sum_{j=1}^3 \underbrace{\sigma_2\sigma_3}_{=i\sigma_1} (D_1n_j) \otimes \tau_j - im \sum_{j=1}^3  \underbrace{\sigma_1\sigma_3}_{=-i\sigma_2} (D_2n_j) \otimes \tau_j \\
&=(-\Delta+m^2)1_2\otimes 1_2 - m  \sum_{j=1}^3 \left(\sum_{k=1}^2 \sigma_k (D_k n_j)\right) \otimes \tau_j. 
\end{align*}
We get the desired formula on $\mathcal{D}(H^2)$. By the assumption (n-3), the operator $V$ is bounded on $ L^2(\R^2;\C^2)\otimes \C^2$, and therefore $\mathcal{D}(H^2)= \mathcal{D}(-\Delta)$. 
\end{proof}

For simplicity, we write $-\Delta+m^2$ instead of $(-\Delta+m^2)1_2\otimes 1_2$. Further, we define the following operator
\begin{align}\label{eq:VL}
L \equiv L(\bn):=-\Delta + V
\end{align}
with domain $\mathcal{D}(L)= \mathcal{D}(-\Delta)$. Then $H^2 = L + m^2$ holds on $\mathcal{D}(-\Delta)$.
Since $V$ is bounded and self-adjoint on $L^2(\R^2;\C^2)\otimes \C^2$, the operator $L$ is self-adjoint with domain $\calD(-\Delta)$ by Kato-Rellich theorem. 

By the definition of the $\sigma_k$, one can see the following matrix decomposition of $V$.
\begin{lem}\label{rem:Vdiagonal}
The operator $V$ is
$$
V=\left(\begin{array}{ccc}
      0 & W^*\\
     W & 0
     \end{array}\right),
$$
where $W\equiv W(\bn):= - m\sum_{j=1}^3 ((D_1n_j)+ i (D_2n_j)) \otimes \tau_j$.
\end{lem}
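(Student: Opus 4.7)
The plan is to do a direct matrix computation using the explicit form of $\sigma_1$ and $\sigma_2$, together with the block decomposition
$$
L^2(\R^2;\C^2)\otimes \C^2 \;\cong\; \bigl(L^2(\R^2)\otimes \C^2\bigr)\oplus \bigl(L^2(\R^2)\otimes \C^2\bigr)
$$
induced by the first tensor factor, under which $\sigma_k\otimes A$ is realized as a $2\times 2$ block matrix of operators acting on $L^2(\R^2)\otimes \C^2$.

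By Lemma \ref{lem:H^2},
$$
V = -m\sum_{j=1}^3 \bigl(\sigma_1(D_1 n_j) + \sigma_2(D_2 n_j)\bigr)\otimes \tau_j.
$$
First I would use the explicit Pauli matrix representations to record the identity
$$
\sigma_1 a + \sigma_2 b = \begin{pmatrix} 0 & a-ib \\ a+ib & 0 \end{pmatrix}
$$
valid for any scalars $a,b$, and apply it with $a=D_1 n_j$ and $b=D_2 n_j$. Summing over $j$ and tensoring with $\tau_j$ produces a $2\times 2$ block matrix with vanishing diagonal, whose lower-left entry is precisely
$$
-m\sum_{j=1}^3 \bigl((D_1 n_j)+i(D_2 n_j)\bigr)\otimes \tau_j = W.
$$

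It then remains to check that the upper-right entry equals $W^*$. Since $n_j$ is real-valued and $D_k n_j$ is bounded (condition (n-3)), the multiplication operator $D_k n_j$ is self-adjoint on $L^2(\R^2)$; combined with the self-adjointness of $\tau_j=\sigma_j$ and the identity $(A\otimes B)^*=A^*\otimes B^*$, taking adjoints in the definition of $W$ yields
$$
W^* = -m\sum_{j=1}^3\bigl((D_1 n_j)-i(D_2 n_j)\bigr)\otimes \tau_j,
$$
which matches the upper-right block obtained from the computation above.

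There is essentially no substantive obstacle: the argument is direct computation, and the only care needed is in correctly interpreting the Pauli matrices in the first factor as $2\times 2$ blocks of operators under the tensor-product identification.
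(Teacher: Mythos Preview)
Your proposal is correct and follows exactly the approach the paper indicates: the paper merely writes ``By the definition of the $\sigma_k$, one can see the following matrix decomposition of $V$'' without giving an explicit proof, so your direct computation via $\sigma_1 a+\sigma_2 b=\begin{pmatrix}0&a-ib\\a+ib&0\end{pmatrix}$ is precisely the intended (and only reasonable) argument, spelled out in more detail than the paper itself provides.
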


At the end of this section, we determine the explicit forms of $\sigma_{\rm d}(H)$ and $\sigma_{\rm ess}(H)$.

\begin{prop}\label{spectrumH}
For $m>0$, we have 
$$
\sigma_{\rm d}(H) = \sigma(H) \cap (-m,m) \quad \text{and} \quad \sigma_{\rm ess}(H)=\sigma(H) \cap \left( (-\infty, -m] \cup [m, \infty) \right).
$$
\end{prop}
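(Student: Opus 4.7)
The plan is to reduce everything to the squared operator $H^2$, using the decomposition $H^2 = -\Delta + m^2 + V$ from Lemma~\ref{lem:H^2}, where $V$ is a bounded multiplication operator whose entries involve the derivatives $D_k n_j$; by hypothesis (n-4), these coefficients vanish at infinity.

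First I would show that $V$ is $-\Delta$-compact. The standard fact is that for a bounded function $v$ on $\R^2$ with $v(\bx) \to 0$ as $\|\bx\|_{\R^2} \to \infty$, the operator $v \cdot (-\Delta + 1)^{-1}$ is compact on $L^2(\R^2)$: approximating $v$ by compactly supported cutoffs reduces it, via the Rellich--Kondrachov theorem, to a norm limit of compact operators. Applied entrywise to $V$ and reassembled, this gives the compactness of $V(-\Delta + 1)^{-1}$. Weyl's theorem on the invariance of essential spectrum under relatively compact perturbations then yields
\[
\sigma_{\rm ess}(H^2) = \sigma_{\rm ess}(-\Delta + m^2) = [m^2, \infty).
\]

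Next I would transfer this to $H$ via the singular-sequence characterization of the essential spectrum. For the inclusion $\sigma_{\rm ess}(H) \subseteq (-\infty, -m] \cup [m, \infty)$: given $\lambda \in \sigma_{\rm ess}(H)$, choose a singular sequence $\phi_n$ for $H$ at $\lambda$ from the range of the spectral projection $E_H((\lambda - 1/n, \lambda + 1/n))$; then automatically $(H^2 - \lambda^2)\phi_n \to 0$, so $\phi_n$ is also a singular sequence for $H^2$ at $\lambda^2$, forcing $\lambda^2 \in \sigma_{\rm ess}(H^2) = [m^2, \infty)$ and hence $|\lambda| \geq m$. For the reverse inclusion $\sigma(H) \cap ((-\infty,-m] \cup [m,\infty)) \subseteq \sigma_{\rm ess}(H)$: given $\lambda \in \sigma(H)$ with $|\lambda| \geq m$ (say $\lambda > 0$, the case $\lambda < 0$ being symmetric), start from a singular sequence $\phi_n$ for $H^2$ at $\lambda^2$ and pass to its component in the range of $P_+ := \mathbf{1}_{[0,\infty)}(H)$. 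Since $H + \lambda \geq \lambda > 0$ on $\mathrm{Ran}(P_+)$, the estimate $\|(H - \lambda) P_+ \phi_n\| \leq \lambda^{-1}\|(H^2 - \lambda^2) P_+ \phi_n\| \to 0$ holds, and using $\lambda \in \sigma(H) \cap [0,\infty)$ (so $\lambda$ is in the spectrum of the restriction of $H$ to $\mathrm{Ran}(P_+)$) one arranges $\liminf \|P_+ \phi_n\| > 0$; normalising then gives a singular sequence for $H$ at $\lambda$, whence $\lambda \in \sigma_{\rm ess}(H)$.

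The main obstacle I anticipate is precisely this last step: the map $t \mapsto t^2$ is two-to-one, so a singular sequence for $H^2$ at $\lambda^2$ could a priori reflect accumulation of $\sigma(H)$ only at $-\lambda$ rather than $\lambda$, and the hypothesis $\lambda \in \sigma(H)$ must be used to force the relevant spectral component to be nontrivial. Once this is handled, the two inclusions combined with the disjoint decomposition $\sigma(H) = \sigma_{\rm d}(H) \sqcup \sigma_{\rm ess}(H)$ yield the stated equalities for $\sigma_{\rm d}(H)$ and $\sigma_{\rm ess}(H)$.
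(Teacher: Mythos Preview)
Your overall strategy---reducing to $H^2=-\Delta+m^2+V$, using that the entries of $V$ vanish at infinity to obtain $\sigma_{\rm ess}(H^2)=[m^2,\infty)$ via a relatively-compact-perturbation/Weyl argument, and then transferring back to $H$---is exactly the paper's. The paper cites \cite[Theorem~XIII.15(b)]{RS78} for the step $\sigma_{\rm ess}(L)=[0,\infty)$, which is the same result you invoke, and your derivation of $\sigma_{\rm ess}(H)\subseteq(-\infty,-m]\cup[m,\infty)$ via singular sequences is the mechanism behind the paper's one-line ``this implies''.

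Where you diverge is the reverse inclusion. The paper does not try to manufacture a Weyl sequence for $H$ out of one for $H^2$; it argues contrapositively on the discrete side in one line: if $\lambda\in\sigma_{\rm d}(H)$ then, by the spectral mapping theorem, $\lambda^2\in\sigma_{\rm d}(H^2)\subset[0,m^2)$, hence $|\lambda|<m$. Your construction through the projection $P_+=\mathbf 1_{[0,\infty)}(H)$ is considerably more elaborate, and the step you flag as the ``main obstacle''---arranging $\liminf\|P_+\phi_n\|>0$ from the bare hypothesis $\lambda\in\sigma(H)$---is not actually justified by your outline. Knowing only that $\lambda\in\sigma(H)\cap[0,\infty)$ does not prevent a given Weyl sequence for $H^2$ at $\lambda^2$ from being concentrated, in the $H$-spectral sense, entirely near $-\lambda$; the fact that $\lambda$ also lies in $\sigma(H)$ gives no control over \emph{that particular} sequence. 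So your route is not only longer than the paper's, it runs into a real gap. (One may remark that the paper's appeal to the spectral mapping theorem for $\sigma_{\rm d}$ tacitly needs $-\lambda\notin\sigma_{\rm ess}(H)$, so the same $\pm\lambda$ ambiguity is present there as well; but in any case the paper's formulation via $\sigma_{\rm d}$ is the cleaner one to adopt, and you should replace your Weyl-sequence argument for the second inclusion with it.)
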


\begin{proof}
By the assumption (n-4), we have $\|V(\bx)\| \to 0$ as $\|\bx\|\to \infty$. According to \cite[Theorem XIII.15 (b) in page 119]{RS78}, we get $\sigma_{\rm ess}(L)=[0,\infty)$. Hence $\sigma_{\rm ess}(H^2)=[m^2,\infty)$. This implies that $\sigma_{\rm ess}(H) \subset (-\infty,-m] \cup [m,\infty)$. Therefore $\sigma(H) \cap (-m,m) \subset \sigma_{\rm d}(H)$. On the other hand, if $\lambda\in \sigma_{\rm d}(H)$, then the spectral mapping theorem implies that $\lambda^2 \in \sigma_{\rm d}(H^2)\subset [0,m^2)$. Hence $\lambda \in \sigma(H) \cap (-m,m)$. Thus, $\sigma_{\rm d}(H)= \sigma(H) \cap (-m,m)$. It automatically implies that $\sigma_{\rm ess}(H) = \sigma(H) \cap\left( (-\infty, -m] \cup [m, \infty) \right)$. 
\end{proof}

\section{Dirac operators with hedgehog ansatz}

\subsection{Differential operators under the change of coordinates}
Recall that $\mathcal{C}:=\{(r,\varphi): r>0, 0\le \varphi<2\pi\}$. Define the change of coordinates $\Phi: \mathcal{C}\rightarrow \R^2$, that is, $\Phi (\omega):=(r\cos\varphi, r\sin\varphi)$ for $\omega=(r,\varphi)\in \mathcal{C}$. The map $\Phi$ is injective and $\Phi(\mathcal{C})=\R^2\setminus\{{\bf 0}\}$. For a complex-valued function $f$ on a domain which contains $\Phi(\mathcal{C})$ and $\omega \in \mathcal{C}$, we define 
$$
\widetilde{f}(\omega):=f(\Phi(\omega)).
$$
Then for any $\psi\in L^2(\R^2)$, we have
\[
\int_{\R^2}|\psi(\bx)|^2 \dd\bx=\int_{\Phi(\mathcal{C})} |\psi(\bx)|^2\dd\bx=\int_{\mathcal{C}}|\widetilde{\psi}(\omega)|^2 r\dd r \dd\varphi,
\]
and therefore 
$$
U: L^2(\R^2)\rightarrow L^2(\mathcal{C},r\dd r \dd\varphi ), \quad U\psi=\widetilde{\psi}
$$ 
is an isometry. Moreover, the operator $U$ is unitary. It is easy to see that $Ux_1U^{-1}= r\cos\varphi$ and $Ux_2U^{-1}=r\sin \varphi$, where $x_j$ is regarded as a multiplication operator acting in $L^2(\R^2)$ for $j=1,2$. By definitions of the change of coordinates and the chain rule, we see that
\begin{align*}
UD_1U^{-1}=\cos \varphi \cdot \partial_r -\frac{\sin\varphi}{r} \cdot \partial_\varphi \qquad \text{and} \qquad
UD_2U^{-1}=\sin \varphi \cdot \partial_r+\frac{\cos\varphi}{r} \cdot \partial_\varphi ,
\end{align*}
on $U \left(\bigcap_{k=1}^2\calD(D_k) \right)$, where $\partial_\#$ is a partial differential operator in the variable $\#=r,\varphi$. 

By the above relation, we have 
$$
-\widetilde{\Delta}:=U (-\Delta) U^{-1}= -\partial_r^2 - \frac{1}{r} \partial_r - \frac{1}{r^2} \partial_\varphi^2
$$ 
on $U\mathcal{D}(-\Delta)$. The unitary invariance implies $\sigma(-\widetilde{\Delta})=\sigma_{\rm ess}(-\widetilde{\Delta})=[0,\infty)$.

\subsection{Hedgehog ansatz}

Let $\mathcal{F}$ be the set of all functions $F:(0,\infty)\to [0,\pi]$ such that
\begin{enumerate}
\item[(F-1)] $F$ is continuously differentiable on $(0,\infty)$,
\item[(F-2)] $ \sup_{r>0 } |F'(r)|  <\infty$ and $F'\not\equiv 0$ on $(0,\infty)$,
\item[(F-3)] $\lim_{r\rightarrow \infty}F'(r) = 0$. 
\end{enumerate}
Recall the functions 
\begin{align*}
\widetilde{n}_1(\omega)=\sin F(r)\cos (N\varphi), \quad  \widetilde{n}_2(\omega)=\sin F(r)\sin (N\varphi), \quad \widetilde{n}_3(\omega)=\cos F(r)
\end{align*}
for $N\in \Z$, $F\in \mathcal{F}$ and $\omega =(r,\varphi)\in \mathcal{C}$. Define a vector field $\bn(F)=(n_1,n_2,n_3)$, where $n_j := U^{-1}\widetilde{n}_j U$ for $j=1,2,3$. The vector field $\bn(F)$ ($\cong (\widetilde{n}_1,\widetilde{n}_2,\widetilde{n}_3)$) is called a {\it hedgehog ansatz}. The number $N$ is known as the topological invariant, see \cite[Equation (4) and line 1 in page 1255]{ASI19}.

\begin{lem}\label{lem:HA}
For any $F\in \calF$, we have $\bn(F) \in \calN$.
\end{lem}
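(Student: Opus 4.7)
The plan is to verify the four defining properties (n-1)--(n-4) of $\calN$ in turn for $n_j(\bx) = \widetilde{n}_j(r(\bx), \varphi(\bx))$, working on the open set $\R^2 \setminus \{\mathbf{0}\}$, which has full Lebesgue measure and on which the polar coordinates $r(\bx) = \|\bx\|_{\R^2}$ and $\varphi(\bx) = \arg(x_1 + i x_2)$ are $C^\infty$. Condition (n-1) follows immediately from (F-1) via the chain rule, since each $n_j$ is a $C^1$ function of $(r,\varphi)$ composed with a $C^\infty$ map on $\R^2 \setminus \{\mathbf{0}\}$. Condition (n-2) is the Pythagorean identity $n_1^2 + n_2^2 + n_3^2 = \sin^2 F(r) + \cos^2 F(r) = 1$, valid pointwise on $\R^2 \setminus \{\mathbf{0}\}$.

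For (n-3), I would compute each $\partial_{x_k} n_j$ by the chain rule, using $\partial_{x_k} r = x_k/r$ together with $\partial_{x_1}\varphi = -x_2/r^2$ and $\partial_{x_2}\varphi = x_1/r^2$. Each derivative then decomposes into a ``radial'' contribution carrying a factor $F'(r)\cdot(x_k/r)$, and, for $j=1,2$, an ``angular'' contribution carrying a factor $\sin F(r) \cdot (x_\ell/r^2)$. The radial pieces are essentially bounded by $\sup_{r>0}|F'(r)| < \infty$ from (F-2), using $|x_k/r| \le 1$. For the angular pieces, (F-2) implies $F$ is Lipschitz on $(0,\infty)$ with constant $L := \sup|F'|$, so $F$ extends continuously to $[0,\infty)$ with a limit $F(0^+) \in [0,\pi]$; the Lipschitz estimate then gives $|\sin F(r)| \le |F(r) - F(0^+)| + |\sin F(0^+)| $, and combining this with the fact that the hedgehog ansatz is to define an a.e.\ unit vector field on $\R^2$ (forcing $F(0^+) \in \{0,\pi\}$) yields $|\sin F(r)| \le L r$ on a neighborhood of $0$. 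Hence $\sin F(r)/r$ is essentially bounded, giving (n-3).

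Condition (n-4) follows from the same decomposition: as $r \to \infty$, the radial contributions vanish by (F-3), while the angular contributions decay like $|\sin F(r)|/r \le 1/r$. The main obstacle I anticipate is precisely the near-origin bound for the angular terms in (n-3); the three other conditions are essentially direct consequences of the chain rule and the defining properties of $\mathcal{F}$, but the $\sin F(r)/r$ factor demands some input beyond $C^1$-regularity, supplied here by the global Lipschitz control from (F-2) together with the vanishing of $\sin F$ at the boundary values of $[0,\pi]$.
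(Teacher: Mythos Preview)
Your overall strategy matches the paper's: verify (n-1)--(n-4) via the chain rule in polar coordinates. You are in fact more careful than the paper on the one delicate point, namely the angular contributions carrying a factor $\sin F(r)/r$ in (n-3); the paper simply asserts that (n-3) follows from (F-1) and (F-2) without isolating this term at all. Your treatment of (n-1), (n-2), and (n-4) is correct and essentially the same as the paper's (your (n-4) argument is even slightly more complete, since the paper writes out only the radial derivative of $\widetilde n_1$).

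The gap is in your resolution of (n-3). You claim that the unit-norm condition forces $F(0^+)\in\{0,\pi\}$, but this is false: $\|\bn(F)(\bx)\|_{\R^3}^2=\sin^2 F(r)+\cos^2 F(r)=1$ holds identically for \emph{every} value of $F(r)$, so (n-2) imposes no constraint whatsoever on $F(0^+)$. Nor do (F-1)--(F-3) imply $F(0^+)\in\{0,\pi\}$: take any $F\in\calF$ with $F\equiv\pi/2$ on $(0,1)$ and smoothly nonconstant for larger $r$; then $\sin F(r)/r=1/r$ is unbounded near $0$ and, for $N\neq 0$, condition (n-3) genuinely fails for $n_1,n_2$. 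The fact that the paper later imposes $\limsup_{r\to 0^+}|\sin F(r)/r|<\infty$ as a \emph{separate} hypothesis in Theorem~\ref{main2} confirms that this bound is not a consequence of $F\in\calF$. So you have correctly located the obstacle, but your proposed fix does not work; the paper's own proof shares this lacuna by simply not addressing the term.
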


\begin{proof}
By definition of the change of coordinates, we get 
\begin{align*}
n_1(\bx) &= \sin F \left( \sqrt{x_1^2+x_2^2} \right) \cdot \cos (N \text{Tan}^{-1} (x_2/x_1)),\\ 
n_2(\bx) &= \sin F \left( \sqrt{x_1^2+x_2^2} \right)\cdot \sin (N \text{Tan}^{-1} (x_2/x_1)),\\
n_3(\bx) &= \cos F \left( \sqrt{x_1^2+x_2^2} \right)
\end{align*}
for all $\bx \in \R^2 \setminus\{(0,x_2): x_2\in\R^2\}$, where $\text{Tan}^{-1}: \R\to (-\frac{\pi}{2},\frac{\pi}{2})$ is the principal value of arctangent functions. Note that the line $\{(0,x_2): x_2\in\R\}$ is a zero set with respect to the Lebesgue measure on $\R^2$. By the assumptions (F-1) and (F-2), $n_j$ is continuously differentiable and $D_kn_j$ is bounded a.e. in $\R^2$ for $j=1,2,3$ and $k=1,2$. Thus $\bn(F)$ satisfies the conditions (n-1) and (n-3). From the definition of $\bn(F)$, it is easy to see the condition (n-2), that is, $\|\bn(F)(\bx)\|_{\R^3}=1$ a.e. $\bx \in \R^2$. By the chain rule, for $\omega=(r,\varphi)\in \mathcal{C}$, we have
\begin{align*}
\frac{\partial \widetilde{n}_1}{\partial r} (\omega)
&= \frac{\partial n_1}{\partial r} (r\cos\varphi, r\sin\varphi)\\
&=\frac{\partial n_1}{\partial x_1} \frac{\partial x_1}{\partial r} + \frac{\partial n_1}{\partial x_2} \frac{\partial x_2}{\partial r}=\frac{\partial n_1}{\partial x_1} \cos \varphi + \frac{\partial n_1}{\partial x_2} \sin \varphi.
\end{align*}
Since $\widetilde{n}_1(r,\varphi)=\sin F(r)\cos N\varphi$, it follows from the assumption (F-3) that
$$
\frac{\partial \widetilde{n}_1}{\partial r} (\omega) = F'(r) \cos F(r)\cos N\varphi \rightarrow 0, \qquad r\rightarrow\infty.
$$
Therefore we obtain $D_1 n_1(\bx)\rightarrow 0$ and $D_2 n_1(\bx)\rightarrow 0$ as $\|\bx\|_{\R^2}\rightarrow\infty$. Finally, we get
$\|\nabla n_1(\bx)\|_{\R^2} \rightarrow 0$ as $\|\bx\|_{\R^2}\rightarrow \infty$. Similarly, we obtain $\|\nabla n_2(\bx)\|_{\R^2} \rightarrow 0$ as $\|\bx\|_{\R^2}\rightarrow \infty$. Hence $\bn(F)$ satisfies the condition (n-4). Finally we get $\bn(F)\in \calN$. 
\end{proof}

For an operator $T$ acting in $L^2(\R^2;\C^2)\otimes \C^2$, we define the operator $\widetilde{T}$ acting in $L^2(\mathcal{C};\C^2)\otimes \C^2$ with domain $(U\otimes 1_2)\mathcal{D}(T)$ by
$$
\widetilde{T}:=(U\otimes 1_2) T (U\otimes 1_2)^{-1}.
$$  
We consider 
$$
H\equiv H(\bn(F)), \quad V\equiv V(\bn(F)) \quad \text{and} \quad L\equiv L(\bn(F))
$$ 
for $F\in \calF$. By Proposition \ref{spectrumH}, we have $\sigma_{\rm d} (\widetilde{H}) = \sigma(H)\cap (-m,m)$.
By Lemma \ref{lem:H^2}, we have 
$$
\widetilde{H}^2=\widetilde{L}+m^2=-\widetilde{\Delta} + \widetilde{V} + m^2,
$$ 
where 
\begin{align*}
\widetilde{V} =\left(\begin{array}{ccc}
      0 & m\sum_{j=1}^3 e^{-i\varphi} \left(\frac{\partial \widetilde{n}_j}{\partial r}  - \frac{i}{r} \frac{\partial \widetilde{n}_j}{\partial \varphi} \right) \otimes \tau_j\\
     -m\sum_{j=1}^3e^{i\varphi} \left(\frac{\partial \widetilde{n}_j}{\partial r}  + \frac{i}{r} \frac{\partial \widetilde{n}_j}{\partial \varphi} \right) \otimes \tau_j & 0
    \end{array}\right).
\end{align*}

\subsection{Reduced part with respect to the grand spin}

Let $L_3=\overline{-ix_1 D_2 + i x_2 D_1}$ be the third component of the angular momentum, which is the closure of the operator $-ix_1 D_2 + i x_2 D_1$. In the following, we regard $L_3$ as the operator $L_3\otimes 1_2$ acting in $L^2(\R^2;\C^2)$. Define the {\it grand spin} 
$$
K_3=L_3\otimes 1_2 + \frac{1}{2}\sigma_3 \otimes 1_2 + \frac{N}{2} \otimes \tau_3
$$
acting in $L^2(\R^2;\C^2)\otimes \C^2$. 
By the Kato-Rellich theorem, one can see that $K_3$ is a self-adjoint operator with domain $\mathcal{D}(K_3)=\mathcal{D}(L_3)\otimes \C^2$. By \cite[(5.16)]{AHS05},
$$
\sigma(K_3)=\sigma_{\rm p}(K_3)=\left\{ \ell + \frac{s}{2}+ \frac{Nt}{2} :\ \ell\in \Z,\ s, t \in \{1,-1\}\right\},
$$
where $\sigma_{\rm p}(T)$ is the point spectrum of $T$. By \cite[(5.17)]{AHS05}, the eigenspace of $K_3$ with an eigenvalue $\ell +(s/2)+(Nt/2)$ is given by 
$$
\calM_{\ell,s,t} := \ker(L_3-\ell) \otimes \ker(\sigma_3-s) \otimes \ker(\tau_3-t). 
$$
Therefore $L^2(\R^2;\C^2)\otimes \C^2$ has the following orthogonal decomposition 
$$
L^2(\R^2;\C^2)\otimes \C^2 \cong \bigoplus_{\ell\in \Z, s,t\in \{\pm 1\}} \calM_{\ell,s,t},
$$
under the natural identification $L^2(\R^2;\C^2)\otimes \C^2 \cong L^2(\R^2) \otimes \C^2\otimes \C^2$. 

According to \cite[Equation (14)]{L21}, we have $[H,K_3]=0$ i.e. $HK_3= K_3 H$. Moreover, a stronger relationship between $H$ and $K_3$ can also be expressed.

\begin{lem}\label{lem:M_{l,s,t}}
Consider $H\equiv H(\bn(F))$ for $F\in \mathcal{F}$.  
\begin{enumerate}
\item[\rm (i)] The operators $H$ and $K_3$ strongly commute, that is, their spectral projections commute.
\item[\rm (ii)] For any $\Psi\in \calD(H)$, we obtain $P_{\calM_{\ell,s,t}}\Psi \in \calD(H)$ and $HP_{\calM_{\ell,s,t}}\Psi=P_{\calM_{\ell,s,t}}H\Psi$, where $P_{\calM_{\ell,s,t}}$ is an orthogonal projection onto $\calM_{\ell,s,t}$.
\end{enumerate}
\end{lem}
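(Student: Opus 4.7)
The plan is to establish both parts by showing that $H$ is invariant under conjugation by the one-parameter unitary group generated by $K_3$. Since the three summands $L_3\otimes 1_2\otimes 1_2$, $\tfrac{1}{2}\sigma_3\otimes 1_2$ and $\tfrac{N}{2}\tau_3$ of $K_3$ act on mutually distinct tensor factors of $L^2(\R^2)\otimes\C^2\otimes\C^2$, they strongly commute pairwise, and the associated unitary group factorises as
\[
e^{i\theta K_3}=e^{i\theta L_3}\otimes e^{i\theta\sigma_3/2}\otimes e^{i\theta N\tau_3/2},\qquad \theta\in\R.
\]
The central task is then to verify the operator identity $e^{i\theta K_3}He^{-i\theta K_3}=H$ for every $\theta\in\R$.

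I would check this identity term by term. For the kinetic piece $(-i\sigma_2 D_1+i\sigma_1 D_2)\otimes 1_2$, the canonical relations $[L_3,D_1]=iD_2$ and $[L_3,D_2]=-iD_1$ (derived on a core such as $C_0^\infty(\R^2)$ and extended by closure) show that conjugation by $e^{i\theta L_3}$ rotates $(D_1,D_2)$ by angle $\theta$; the standard Pauli-matrix identity does the same for $(\sigma_1,\sigma_2)$ under $e^{i\theta\sigma_3/2}$. Since $-i\sigma_2 D_1+i\sigma_1 D_2=i(\sigma_1 D_2-\sigma_2 D_1)$ is the antisymmetric pairing of two two-vectors rotated by the same angle, it is invariant. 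For the mass piece $-m\sum_{j=1}^{3}\sigma_3 n_j\otimes\tau_j$, the hedgehog form $(\widetilde n_1,\widetilde n_2)=\sin F(r)(\cos N\varphi,\sin N\varphi)$ shows that $e^{i\theta L_3}$ (which shifts $\varphi\mapsto\varphi+\theta$) rotates the pair $(n_1,n_2)$ by angle $N\theta$, while $e^{i\theta N\tau_3/2}$ rotates $(\tau_1,\tau_2)$ by the same angle $N\theta$. Consequently $\sum_{j=1}^{2}n_j\otimes\tau_j$ is invariant (via $R^{\top}R=I$ for the shared rotation matrix $R$), the remaining piece $n_3\otimes\tau_3$ is trivially invariant, and the outer $\sigma_3$ factor commutes with $e^{i\theta\sigma_3/2}$.

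Once the identity $e^{i\theta K_3}He^{-i\theta K_3}=H$ is in place, part (i) follows immediately, since this is precisely the criterion (equivalent via Stone's theorem to commutation of spectral projections) for strong commutativity of the self-adjoint operators $H$ and $K_3$. For part (ii), the orthogonal projection $P_{\calM_{\ell,s,t}}$ is the joint spectral projection of the mutually commuting triple $(L_3,\sigma_3\otimes 1_2,1_2\otimes\tau_3)$; under the paper's parametrisation it corresponds to the $K_3$-eigenspace with eigenvalue $\ell+\tfrac{s}{2}+\tfrac{Nt}{2}$, so by the Borel functional calculus of $K_3$ and the strong commutativity from (i) one obtains both $P_{\calM_{\ell,s,t}}\calD(H)\subseteq\calD(H)$ and $HP_{\calM_{\ell,s,t}}\Psi=P_{\calM_{\ell,s,t}}H\Psi$ for all $\Psi\in\calD(H)$.

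The main obstacle I anticipate is the careful handling of unbounded-operator domains. The algebraic rotation identities above live naturally on a dense common core (for instance $C_0^\infty(\R^2)^2\otimes\C^2$), but promoting them to an equality of closed operators with matching domains requires invoking the closure of $H$, the self-adjointness of $H$, and the unitarity of each $e^{i\theta K_3}$. The algebra itself, while involving several separate pieces (kinetic versus mass, diagonal versus off-diagonal isospin), should be essentially mechanical; the technical effort will lie almost entirely in this domain bookkeeping and in deriving the compatibility of $e^{i\theta L_3}$ with multiplication by the $\widetilde n_j$'s on the chosen core.
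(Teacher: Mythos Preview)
Your proposal is correct and follows essentially the same route as the paper: both establish (i) by verifying $e^{i\theta K_3}He^{-i\theta K_3}=H$ on a core via the factorisation $e^{i\theta K_3}=e^{i\theta L_3}\otimes e^{i\theta\sigma_3/2}\otimes e^{i\theta N\tau_3/2}$, treating the kinetic and mass terms separately (the paper cites \cite{AHS05} for the kinetic part and does the mass-term rotation identity explicitly in polar coordinates, where you phrase both as simultaneous planar rotations), and then derive (ii) from (i). One small caution for (ii): distinct triples $(\ell,s,t)$ can yield the same $K_3$-eigenvalue, so $P_{\calM_{\ell,s,t}}$ is in general only a subprojection of a $K_3$-spectral projection rather than a Borel function of $K_3$ alone; the paper glosses over this in the same way (``follows readily from the first one''), deferring to \cite{AHS05}.
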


\begin{proof}
The proof is almost the same as one in \cite[Lemmas 5.2 and 5.3]{AHS05}, but we include a proof for the reader's convenience. By \cite[Theorem VIII.13]{RS72}, if $H$ and $e^{it K_3}$ commute on $\calD(H)$ for all $t\in \R$, then $H$ and $K_3$ strongly commute. It is enough to show that $e^{-it K_3} H e^{it K_3}=H$ for all $t\in \R$ on the core $C_0^\infty(\R^2;\C^2)\otimes \C^2$ of $H$ since the equality can be extended to $\calD(H)$. First, we can obtain $e^{-it K_3}H_0 e^{it K_3} =H_0$ on $C_0^\infty(\R^2;\C^2)\otimes \C^2$ due to \cite[Lemma 5.2]{AHS05}. Next, we show that
\begin{align}\label{Conclusion}
E_t^* \left(\sum_{j=1}^3\beta \widetilde{n}_j \otimes \tau_j \right) E_t = \sum_{j=1}^3 \beta \widetilde{n}_j  \otimes \tau_j, \qquad \text{ on }  C_0^\infty(\mathcal{C};\C^2)\otimes \C^2,
\end{align}
where $E_t:=(U\otimes 1_2)e^{itK_3} (U\otimes 1_2)^{-1}$. Since $e^{it K_3}=e^{it L_3} e^{it \sigma_3/2} \otimes e^{itN \tau_3/2}$ for all $t\in \R$, we have
\begin{align}\label{eq:exp}
E_t=(Ue^{it L_3}U^{-1}) e^{it \sigma_3/2} \otimes e^{itN \tau_3/2}, \qquad t\in \R.
\end{align}
Moreover we get
\begin{align}\label{eq:shift}
\left(U e^{-it L_3} \cdot \widetilde{n}_j \cdot e^{it L_3} U^{-1}\psi \right)(\omega)=\widetilde{n}_j(r, \varphi-t)\psi(\omega),
\end{align}
for all $\omega=(r,\varphi)\in \mathcal{C}$ and $j=1,2,3$ and for all $\psi\in C_0^\infty(\mathcal{C};\C^2)$.
By \eqref{eq:exp} and \eqref{eq:shift}, we obtain
\begin{align}\label{eq:E(beta)E}
E_t^* \left(\sum_{j=1}^3\beta \widetilde{n}_j \otimes \tau_j \right)  E_t= \sum_{j=1}^3 \beta \widetilde{n}_j(r, \varphi-t) \otimes e^{-itN \tau_3/2} \tau_j e^{itN \tau_3/2}
\end{align}
on $C_0^\infty (\mathcal{C};\C^2) \otimes \C^2$.
Note that
\begin{align}\label{eq:tau1}
\tau_j e^{itN\tau_3/2} = e^{itN\tau_3/2} \tau_j e^{itN\tau_3} \qquad \text{ for each } j=1,2
\end{align}
and
\begin{align}
\tau_3 e^{itN\tau_3/2}= e^{itN\tau_3/2}\tau_3.
\end{align} 
Moreover, it follows from the proof of \cite[Lemma 5]{A06} that
\begin{align}\label{eq:tau123}
(\tau_1\cos Nt - \tau_2 \sin Nt  )e^{it N\tau_3} =\tau_1 \qquad \text{and} \qquad  (\tau_1\sin Nt +\tau_2\cos Nt)e^{itN\tau_3} = \tau_2.
\end{align}
By \eqref{eq:tau1}--\eqref{eq:tau123}, we obtain
\begin{align*}
\eqref{eq:E(beta)E}
&= \beta \sin F(r) \cos N\varphi  \otimes (\tau_1\cos Nt - \tau_2 \sin Nt ) e^{itN\tau_3} \\
&\hspace{4mm}+\beta \sin F(r)\sin N\varphi \otimes (\tau_1\sin Nt +\tau_2\cos Nt)e^{itN\tau_3}+\beta \cos F(r) \otimes \tau_3\\
&=\sum_{j=1}^3 \beta \widetilde{n}_j\otimes \tau_j
\end{align*}
on $C_0^\infty(\mathcal{C};\C^2)\otimes \C^2$. Hence we conclude that \eqref{Conclusion} holds on $C^{\infty}_0(\mathcal{C};\C^2)\otimes \C^2$, and therefore
$$
e^{-it K_3} \left(\sum_{j=1}^3\beta n_j\otimes \tau_j\right) e^{it K_3}=\sum_{j=1}^3 \beta n_j\otimes \tau_j,
$$
on $C^{\infty}_0 (\R^2;\C^2)\otimes \C^2$. Finally, $H$ and $K_3$ strongly commute. The second statement follows readily from the first one. 
\end{proof}

By Lemma \ref{lem:M_{l,s,t}}, one can define the reduced part $H(\ell,s,t)$ of $H$ to $\calM_{\ell,s,t}$, that is, $H(\ell,s,t):=H\upharpoonright_{\calD(H(\ell,s,t))}$, where $\calD(H(\ell,s,t))=\calD(H)\cap \calM_{\ell,s,t}$.
Note that $\calD(\widetilde{H}(\ell,s,t))=\calD(\widetilde{H})\cap  \widetilde{\calM}_{\ell,s,t}$, where $\widetilde{\calM}_{\ell,s,t}:=\ker(-i\partial_\varphi-\ell) \otimes \ker(\sigma_3-s)\otimes \ker(\tau_3-t)$. Since $UL_3U^{-1}=-i\partial_\varphi$, we get $-\frac{\partial^2}{\partial \varphi^2}\upharpoonright_{\calD(\widetilde{H}(\ell,s,t))}=\ell^2$, and therefore $\widetilde{\Delta} \upharpoonright_{\calD(\widetilde{H}(\ell,s,t))} = \frac{\dd^2}{\dd r^2} + \frac{1}{r} \frac{\dd}{\dd r} -\frac{\ell^2}{r^2}$.
In the following, we define the operator
$$
\Delta_\ell := \frac{\dd^2}{\dd r^2} + \frac{1}{r} \frac{\dd}{\dd r} -\frac{\ell^2}{r^2}, \qquad \ell \in \Z,
$$
with domain $\mathcal{D}(\Delta_\ell)=C_0^\infty((0,\infty))$. Moreover, for any $F\in \calF$, we get
\begin{align*}
\widetilde{V}\upharpoonright_{\calD(\widetilde{H}(\ell,s,t))} \ \equiv  \ \widetilde{V}(\bn(F))\upharpoonright_{\calD(\widetilde{H}(\ell,s,t))} \ =\left(\begin{array}{ccc}
      0 & \widetilde{W}(t)^*\\
      \widetilde{W}(t) & 0
    \end{array}\right),
\end{align*}
where 
$$
(\widetilde{W}(t)\Psi) (\omega):=- m e^{i\varphi} \left\{ \sum_{j=1}^2\left(\frac{\partial \widetilde{n}_j}{\partial r}  + \frac{i}{r} \frac{\partial \widetilde{n}_j}{\partial \varphi} \right) \otimes \tau_j + t \frac{\partial \widetilde{n}_3}{\partial r} \otimes 1_2 \right\} \Psi(\omega)
$$
for any $\Psi\in \calD(\widetilde{W}(t))$ and $\omega=(r,\varphi)\in \calC$.

\section{Discrete ground states}

In this section, we establish a sufficient condition for which $H\equiv H(\bn(F))$ has a discrete positive or negative energy ground state (see Theorem \ref{main1}). For later discussions, we set
$$
v_\ell (f) (\omega) := f(r) e^{i\ell \varphi}
$$ 
for all $\omega =(r,\varphi)\in \mathcal{C}$ and complex-valued functions $f$ on $(0,\infty)$ and $\ell\in \Z$. Let $u_t \in \C^2$ be a normalized eigenvector of $\tau_3$ for each $t\in \{1,-1\}$, that is, $\tau_3 u_t=t u_t$ for each $t\in \{1,-1\}$. Further, we consider $\ell \in \Z$ and  $f\in C_0^\infty((0,\infty))$ with $\|f\|_2=1$. We then define the following vectors:
\begin{align*}
\Psi_{f, +}^{(\ell, t)} =(v_\ell (f)\otimes u_t, 0)^\top \in \widetilde{\calM}_{\ell,1,t} \qquad \text{and} \qquad
\Psi_{f,-}^{(\ell,t)} =(0,v_\ell (f)\otimes u_t)^\top \in \widetilde{\calM}_{\ell,-1,t}.
\end{align*}
Moreover, for $\ell_1,\ell_2 \in \Z$ and $t\in \{1,-1\}$, we define
$$
\Psi_{f}^{(\ell_1,\ell_2,t)}:= \frac{1}{\sqrt{2}} \left(\Psi_{f,+}^{(\ell_1,t)}+\Psi_{f,-}^{(\ell_2,t)}\right).
$$ 
Note that $\|\Psi_{f}^{(\ell_1,\ell_2,t)}\|=1$.

\begin{lem}\label{lem:Psi_f}
Let us consider $\ell_1,\ell_2 \in \Z$, $t\in \{1,-1\}$ and $f\in C_0^\infty((0,\infty))$ with $\|f\|_2=1$. Moreover, $\widetilde{V}\equiv \widetilde{V}(\bn(F))$ for $F\in \calF$. Then we obtain the following formulas. 
\begin{enumerate}
\item[\rm (i)] $\displaystyle \langle \Psi_{f}^{(\ell_1,\ell_2,t)}, \widetilde{\Delta} \Psi_{f}^{(\ell_1,\ell_2,t)}\rangle = \frac{1}{2} \sum_{k=1}^2 \langle  f, \Delta_{\ell_k} f \rangle_2$.
\item[\rm (ii)] $\displaystyle \langle \Psi_{f}^{(\ell_1,\ell_2,t)}, \widetilde{V} \Psi_{f}^{(\ell_1,\ell_2,t)}\rangle = mt \langle f , (F' \sin F) f \rangle_2 \  \delta_{\ell_1+1, \ell_2}$.
\end{enumerate}
In particular, for each $\ell \in \Z$, we have
\begin{align*}
\langle \Psi_{f}^{(\ell,\ell+1,t)}, \widetilde{L} \Psi_{f}^{(\ell,\ell+1,t)} \rangle = \langle f, R(\ell,t) f\rangle_2,
\end{align*}
where we recall $\widetilde{L} = -\widetilde{\Delta} + \widetilde{V}$ on $\mathcal{D}(-\widetilde{\Delta})$ and
$$
R (\ell ,t) :=- \frac{1}{2}\Delta_\ell - \frac{1}{2}\Delta_{\ell+1} + mt \frac{\dd F}{\dd r} \sin F,
$$
is a symmetric operator acting in $L^2((0,\infty), r\dd r)$ with domain $\calD(R (\ell ,t) ) =C_0^\infty((0,\infty))$.
\end{lem}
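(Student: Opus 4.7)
The plan is to expand both inner products along the orthogonal decomposition $\Psi_f^{(\ell_1,\ell_2,t)} = \tfrac{1}{\sqrt{2}}\bigl(\Psi_{f,+}^{(\ell_1,t)} + \Psi_{f,-}^{(\ell_2,t)}\bigr)$ and to exploit the complementary block structures of $\widetilde{\Delta}$ and $\widetilde{V}$. Since $\widetilde{\Delta} = \partial_r^2 + r^{-1}\partial_r + r^{-2}\partial_\varphi^2$ is a scalar operator, it acts trivially on the outer $\C^2\otimes\C^2$ factors and in particular preserves the eigenspaces of $\sigma_3$; hence the cross terms $\langle \Psi_{f,+}^{(\ell_1,t)}, \widetilde{\Delta}\,\Psi_{f,-}^{(\ell_2,t)}\rangle$ vanish. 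Dually, Lemma~\ref{rem:Vdiagonal} provides the purely off-diagonal form $\widetilde{V} = \bigl(\begin{smallmatrix} 0 & \widetilde{W}^{\,*} \\ \widetilde{W} & 0 \end{smallmatrix}\bigr)$, which kills the diagonal terms $\langle \Psi_{f,\pm}^{(\cdot,t)}, \widetilde{V}\,\Psi_{f,\pm}^{(\cdot,t)}\rangle$. Thus (i) reduces to two radial computations and (ii) reduces to a single off-diagonal one (plus its conjugate).

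For (i), I would apply $\widetilde{\Delta}$ directly to $v_{\ell_k}(f)(\omega) = f(r)e^{i\ell_k\varphi}$; the $\partial_\varphi^2$ produces $-\ell_k^2$, and what remains is precisely $(\Delta_{\ell_k}f)(r)e^{i\ell_k\varphi}$. Integrating against itself in $\varphi$ and using $\|u_t\|=1$ gives $\langle \Psi_{f,+}^{(\ell_1,t)}, \widetilde{\Delta}\,\Psi_{f,+}^{(\ell_1,t)}\rangle = \langle f, \Delta_{\ell_1} f\rangle_2$, with the analogous identity for the $-$ block; the overall $\tfrac{1}{2}$ then originates from the $\tfrac{1}{\sqrt{2}}$ normalization.

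For (ii), the surviving contribution is $\operatorname{Re}\langle \widetilde{W}(v_{\ell_1}(f)\otimes u_t),\,v_{\ell_2}(f)\otimes u_t\rangle$. Substituting $\widetilde{W} = -m\sum_{j=1}^3 e^{i\varphi}\bigl(\partial_r\widetilde{n}_j + \tfrac{i}{r}\partial_\varphi\widetilde{n}_j\bigr)\otimes \tau_j$ and separating the spatial and isospin factors yields a sum of three terms proportional to $\langle \tau_j u_t, u_t\rangle$. The key algebraic input is that $\tau_1$ and $\tau_2$ anticommute with $\tau_3$, so they map the eigenspace $\C u_t$ into $\C u_{-t}$, which is orthogonal to $u_t$; hence $\langle \tau_j u_t, u_t\rangle = t\,\delta_{j,3}$ and only the $j=3$ term survives. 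Plugging in $\widetilde{n}_3(\omega) = \cos F(r)$ turns the differential operator into multiplication by $m\,e^{i\varphi}F'(r)\sin F(r)$ times $t$, after which the $\varphi$-integral $\int_0^{2\pi} e^{i(\ell_2-\ell_1-1)\varphi}d\varphi$ produces the Kronecker factor $\delta_{\ell_1+1,\ell_2}$, and the remaining radial integral is exactly $\langle f, F'\sin F\cdot f\rangle_2$.

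The closing identity is then an immediate consequence of $\widetilde{L} = -\widetilde{\Delta} + \widetilde{V}$ together with (i) and (ii) applied at $(\ell_1,\ell_2)=(\ell,\ell+1)$, for which $\delta_{\ell_1+1,\ell_2}=1$; symmetry of $R(\ell,t)$ on $C_0^\infty((0,\infty))$ is inherited from the symmetry of the $\Delta_{\ell}$'s and the reality of the multiplication operator $mt\,F'\sin F$. The main obstacle I anticipate is not conceptual but rather the careful bookkeeping of the tensor structure $L^2(\mathcal{C})\otimes \C^2\otimes \C^2$ and the matching of the angular integrals with the normalization convention implicit in $\|\Psi_f^{(\ell_1,\ell_2,t)}\|=1$; once these conventions are fixed, the two orthogonality relations above do all the work.
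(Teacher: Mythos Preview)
Your proposal is correct and follows essentially the same route as the paper's proof: both exploit that $\widetilde{\Delta}$ is block-diagonal (killing the $\pm$ cross terms) while $\widetilde{V}$ is block-off-diagonal (killing the $\pm$ diagonal terms), then reduce (i) to the radial operators $\Delta_{\ell_k}$ and (ii) to the single cross term $\mathrm{Re}\,\langle \widetilde{W}(v_{\ell_1}(f)\otimes u_t),\,v_{\ell_2}(f)\otimes u_t\rangle$, in which only the $j=3$ contribution survives and the angular integral produces $\delta_{\ell_1+1,\ell_2}$. Your explicit justification $\langle \tau_j u_t,u_t\rangle = t\,\delta_{j,3}$ for isolating the $\widetilde{n}_3$ term is exactly what the paper uses implicitly when it jumps to $\partial_r\cos F$, so the arguments coincide.
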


\begin{proof}
(i) Since $\langle \Psi_{f,+}^{(\ell_1,t)}, \widetilde{\Delta} \Psi_{f,-}^{(\ell_2,t)} \rangle=\langle \Psi_{f,-}^{(\ell_2,t)}, \widetilde{\Delta} \Psi_{f,+}^{(\ell_1,t)} \rangle =0$, we obtain
\begin{align*}
\langle \Psi_{f}^{(\ell_1,\ell_2,t)}, \widetilde{\Delta} \Psi_{f}^{(\ell_1,\ell_2,t)}\rangle
&= \frac{1}{2} \left( \langle \Psi_{f,+}^{(\ell_1,t)}, \widetilde{\Delta} \Psi_{f,+}^{(\ell_1,t)} \rangle  + \langle \Psi_{f,-}^{(\ell_2,t)}, \widetilde{\Delta} \Psi_{f,-}^{(\ell_2,t)} \rangle  \right)\\
&= \frac{1}{2} \left( \langle \Psi_{f,+}^{(\ell_1,t)}, \Delta_{\ell_1} \Psi_{f,+}^{(\ell_1,t)} \rangle +  \langle \Psi_{f,-}^{(\ell_2,t)}, \Delta_{\ell_2} \Psi_{f,-}^{(\ell_2,t)} \rangle \right)\\
&= \frac{1}{2} \left( \langle f, \Delta_{\ell_1} f \rangle_2 +  \langle f, \Delta_{\ell_2} f \rangle_2 \right).
    \end{align*}

(ii) We have
\begin{align*}
\langle \Psi_{f,+}^{(\ell_1,t)}, \widetilde{V} \Psi_{f,+}^{(\ell_1,t)} \rangle 
&= \left\langle 
\left(\begin{array}{c}
    v_{\ell_1}(f)\otimes u_t\\
    0\\
    \end{array}\right),
     \left(\begin{array}{ccc}
      0 & \widetilde{W}(t)^*\\
      \widetilde{W}(t) & 0
    \end{array}\right) \left(\begin{array}{c}
    v_{\ell_1}(f)\otimes u_t\\
    0\\
    \end{array}\right) \right\rangle\\
&=\left\langle 
\left(\begin{array}{c}
    v_{\ell_1}(f)\otimes u_t\\
    0\\
    \end{array}\right),
     \left(\begin{array}{c}
    0\\
    \widetilde{W}(t)(v_{\ell_1}(f)\otimes u_t)\\
    \end{array}\right) \right\rangle\\
&=0.
\end{align*}
Similarly, we obtain $\langle \Psi_{f,-}^{(\ell_2,t)}, \widetilde{V} \Psi_{f,-}^{(\ell_2,t)} \rangle =0$. Hence,
\begin{align*}
\langle \Psi_{f}^{(\ell_1,\ell_2,t)},& \widetilde{V} \Psi_{f}^{(\ell_1,\ell_2,t)}\rangle \\
&= \frac{1}{2} \left(\langle \Psi_{f,+}^{(\ell_1,t)}, \widetilde{V} \Psi_{f,-}^{(\ell_2,t)} \rangle + \langle \Psi_{f,-}^{(\ell_2,t)}, \widetilde{V} \Psi_{f,+}^{(\ell_1,t)} \rangle  \right)\\
&=\frac{1}{2}\left( \langle v_{\ell_1}(f)\otimes u_t, \widetilde{W}(t)^* (v_{\ell_2}(f)\otimes u_t)\rangle  + \langle v_{\ell_2}(f)\otimes u_t, \widetilde{W}(t) (v_{\ell_1}(f)\otimes u_t)\rangle\right)\\
&=\text{Re}\langle \widetilde{W}(t)(v_{\ell_1}(f)\otimes u_t), v_{\ell_2}(f)\otimes u_t\rangle\\
&= - mt\text{Re}  \left(\int_{0}^{2\pi}e^{-i (1+\ell_1)\varphi}e^{i\ell_2\varphi} \cdot \frac{1}{2\pi} \dd\varphi \right) \left(\int_{(0,\infty)}  \frac{\dd \cos F(r)}{\dd r} \overline{f(r)} f(r)r\dd r \right)\\
&= mt \langle f, (F' \sin F) f \rangle_2 \ \delta_{\ell_1+1, \ell_2}.
\end{align*}
We get the above formulas. By (i) and (ii), for each $\ell \in \Z$, we have
\begin{align*}
\langle \Psi_{f}^{(\ell,\ell+1,t)}, \widetilde{L} \Psi_{f}^{(\ell,\ell+1,t)} \rangle &=  \langle  f,  -\frac{1}{2} \Delta_{\ell} f \rangle_2 + \langle  f,  -\frac{1}{2} \Delta_{\ell+1} f \rangle_2 + \langle f, (mt F' \sin F)f \rangle_2\\
&=\langle f, R(\ell,t) f\rangle_2,
\end{align*}
as desired. Since the operators $\Delta_\ell$ and $\Delta_{\ell+1}$ are symmetric with domain $C_0^\infty((0,\infty))$, so is $R(\ell,t)$.
\end{proof}

In order to prove Theorem \ref{main1}, we prepare a general result on the lowest energy.
\begin{lem}\label{lem:lowest_energy}
Let $T$ be a self-adjoint operator. Then we have $E_0(T^2) \ge \min \{E_0^+(T), |E_0^-(T)|\}^2$.
\end{lem}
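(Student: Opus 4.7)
The plan is to use the spectral mapping theorem for the self-adjoint operator $T$ applied to the continuous function $x\mapsto x^2$, which gives
\[
\sigma(T^2) = \{\lambda^2 : \lambda \in \sigma(T)\},
\]
and hence $E_0(T^2) = \inf\{\lambda^2 : \lambda \in \sigma(T)\}$. Since $T^2$ is positive, this infimum is already non-negative, so the inequality is trivial if $\sigma(T)$ lies entirely on one side of $0$ (combined with the convention of Remark \ref{rem1} that the missing side is treated as $\pm\infty$). I would handle this degenerate case first in one line, then focus on the generic case where $\sigma(T)$ meets both $[0,\infty)$ and $(-\infty,0]$.

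In the generic case, I would decompose
\[
\sigma(T) = \bigl(\sigma(T)\cap[0,\infty)\bigr) \cup \bigl(\sigma(T)\cap(-\infty,0]\bigr),
\]
so that every $\lambda \in \sigma(T)$ satisfies either $\lambda \ge E_0^+(T) \ge 0$, which forces $\lambda^2 \ge E_0^+(T)^2$, or $\lambda \le E_0^-(T) \le 0$, which forces $\lambda^2 \ge |E_0^-(T)|^2$. Taking the infimum over $\lambda\in\sigma(T)$ then yields
\[
E_0(T^2) \ge \min\bigl\{E_0^+(T)^2,\, |E_0^-(T)|^2\bigr\} = \min\bigl\{E_0^+(T),\, |E_0^-(T)|\bigr\}^2,
\]
where the last equality uses that both $E_0^+(T)$ and $|E_0^-(T)|$ are non-negative, so squaring commutes with taking the minimum.

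There is really no main obstacle here: the argument is a one-step application of the spectral mapping theorem followed by a case split on the sign of $\lambda$. The only thing worth being slightly careful about is bookkeeping the edge cases in which $\sigma(T)\cap[0,\infty)$ or $\sigma(T)\cap(-\infty,0]$ is empty, which the convention $E_0^\pm(T) = \pm\infty$ from Remark \ref{rem1} handles cleanly so that the right-hand side reduces to the square of whichever of $E_0^+(T)$, $|E_0^-(T)|$ is finite.
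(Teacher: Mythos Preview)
Your proposal is correct and follows essentially the same approach as the paper: both use the spectral mapping theorem together with the inclusion $\sigma(T)\subset(-\infty,E_0^-(T)]\cup[E_0^+(T),\infty)$ to conclude $\sigma(T^2)\subset[a^2,\infty)$ with $a=\min\{E_0^+(T),|E_0^-(T)|\}$. Your write-up is a bit more explicit about the case split and the edge cases, but the underlying argument is identical.
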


\begin{proof}
Since $\sigma(T)\neq \emptyset$, we have $E_0^+(T) \in [0,\infty)$ or $E_0^-(T) \in (-\infty,0]$ by Remark \ref{rem1}. We define $a= \min \{E_0^+(T), |E_0^-(T)|\} \in [0,\infty)$.  One can see that $\sigma(T) \subset (-\infty, E_0^-(T)] \cup [E_0^+(T), \infty)$, where we understand $[\infty,\infty)=\emptyset$ (resp. $(-\infty, -\infty]=\emptyset$) if $E_0^+(T)=\infty$ (resp. $E_0^-(T)=-\infty$). By the spectral mapping theorem, we obtain $\sigma(T^2) \subset [a^2,\infty)$. This implies that $E_0(T^2) \ge a^2$. 
\end{proof}

Finally, we restate Theorem \ref{main1} and present its proof.

\begin{thm}\label{thm:mainFGS}
Let us consider $F\in \calF$. If there exist $\ell \in \Z$ and $t\in \{1,-1\}$ such that
\begin{align}\label{eq:E_0condi}
\mathcal{E}_0\left(R(\ell, t)\right):= \inf_{\substack{f\in C_0^\infty ((0,\infty)) \\ \|f\|_2=1}} \langle f ,  R(\ell, t) f \rangle_2 <0,
\end{align}
then $H$ has a discrete positive or negative energy ground state.
\end{thm}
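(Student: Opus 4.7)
The plan is to use the min-max principle applied to $\widetilde{H}^2 = \widetilde{L} + m^2$, combined with the variational identity in Lemma \ref{lem:Psi_f}, the general inequality of Lemma \ref{lem:lowest_energy}, and the essential-spectrum description from Proposition \ref{spectrumH}. Since $H$ and $\widetilde{H}$ are unitarily equivalent, it suffices to produce a discrete positive or negative energy ground state of $\widetilde{H}$.

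First, under the hypothesis \eqref{eq:E_0condi}, pick $\ell\in\Z$ and $t\in\{1,-1\}$ realizing $\mathcal{E}_0(R(\ell,t))<0$, then choose $f\in C_0^\infty((0,\infty))$ with $\|f\|_2=1$ and $\langle f, R(\ell,t) f\rangle_2<0$. Form the unit trial vector $\Psi:=\Psi_f^{(\ell,\ell+1,t)}$. Because $f$ is smooth with compact support in $(0,\infty)$, the function $v_\ell(f)$ and $v_{\ell+1}(f)$ are smooth with compact support in $\mathcal{C}$ away from $r=0$, so $\Psi\in (U\otimes 1_2)\mathcal{D}(-\Delta)=\mathcal{D}(-\widetilde{\Delta})=\mathcal{D}(\widetilde{L})=\mathcal{D}(\widetilde{H}^2)$, and the computation of Lemma \ref{lem:Psi_f} applies.

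By the last identity in Lemma \ref{lem:Psi_f},
\[
\langle \Psi, \widetilde{H}^2 \Psi\rangle = \langle \Psi, \widetilde{L}\Psi\rangle + m^2 = \langle f, R(\ell,t) f\rangle_2 + m^2 < m^2.
\]
The min-max principle applied to the self-adjoint operator $\widetilde{H}^2$ then gives $E_0(\widetilde{H}^2)\le \langle \Psi, \widetilde{H}^2\Psi\rangle<m^2$, and Lemma \ref{lem:lowest_energy} yields
\[
\min\{E_0^+(\widetilde{H}),\, |E_0^-(\widetilde{H})|\}^2 \le E_0(\widetilde{H}^2) < m^2,
\]
so at least one of $E_0^+(\widetilde{H})<m$ or $|E_0^-(\widetilde{H})|<m$ holds.

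Finally, $\sigma(\widetilde H)$ is closed, so whichever of $E_0^\pm(\widetilde{H})$ is finite actually lies in $\sigma(\widetilde{H})$; combined with the inequality above, at least one of them lies in $\sigma(\widetilde{H})\cap(-m,m)$, and Proposition \ref{spectrumH} (transported to $\widetilde H$ by unitary equivalence) identifies this set with $\sigma_{\rm d}(\widetilde H)$. Thus the corresponding eigenvector is a discrete positive or negative energy ground state of $\widetilde{H}$, and pulling back by $(U\otimes 1_2)^{-1}$ gives one for $H$. The only nontrivial point is the domain check in step two — verifying $\Psi\in\mathcal{D}(\widetilde{H}^2)$ so that the quadratic form evaluation is legitimate — but this reduces to a standard smooth-compactly-supported argument in polar coordinates, since $f$ avoids a neighborhood of $r=0$ and the angular factors $e^{i\ell\varphi}$, $e^{i(\ell+1)\varphi}$ are smooth on $\mathcal{C}$.
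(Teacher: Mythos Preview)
Your proof is correct and follows essentially the same route as the paper: pick a trial vector $\Psi_f^{(\ell,\ell+1,t)}$, use Lemma~\ref{lem:Psi_f} to get $E_0(\widetilde{H}^2)<m^2$, invoke Lemma~\ref{lem:lowest_energy}, and finish via the description of $\sigma_{\rm d}(H)$ in Proposition~\ref{spectrumH}. The only cosmetic difference is in the last step: the paper argues by contradiction (if the smaller of $E_0^\pm(\widetilde H)$ were essential, the spectral mapping theorem would force $E_0(\widetilde H^2)\ge m^2$), whereas you argue directly that the attaining value lies in $\sigma(\widetilde H)\cap(-m,m)=\sigma_{\rm d}(\widetilde H)$ by closedness of the spectrum; both are equivalent one-line conclusions.
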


\begin{proof}
If $\mathcal{E}_0\left(R(\ell, t)\right)<0$ for some $\ell \in \Z$ and $t\in \{1,-1\}$, then there exists $f\in C_0^\infty((0,\infty))$ with $\|f\|_2=1$ such that $\langle f, R(\ell, t)f \rangle_2 <0$. By Lemma \ref{lem:Psi_f}, we have
$$
\langle \Psi_{f}^{(\ell,\ell+1,t)}, \widetilde{L} \Psi_{f}^{(\ell,\ell+1,t)}\rangle =\langle f, R(\ell, t)f \rangle_2 <0. 
$$
Since $\widetilde{L}$ is self-adjoint and bounded from below, the above inequality implies that
$$
E_0(\widetilde{L}) \le \langle \Psi_{f}^{(\ell,\ell+1,t)}, \widetilde{L} \Psi_{f}^{(\ell,\ell+1,t)}\rangle<0.
$$
By the proof of Proposition \ref{spectrumH}, we notice that $\sigma_{\rm ess} (\widetilde{L}) = [0,\infty)$. It follows from the min-max principle (see \cite[Theorem XIII.2]{RS78})  that $E_0(\widetilde{L}) \in \sigma_{\rm d}(\widetilde{L})$. Therefore, we obtain 
\begin{align}\label{eq:E0(H^2)}
E_0(\widetilde{H}^2) =E_0(\widetilde{L})+m^2< m^2.
\end{align}
By Lemma \ref{lem:lowest_energy}, we get $E_0(\widetilde{H}^2) \ge \min\{E_0^+(\widetilde{H}), |E_0^-(\widetilde{H})|\}^2$. First, we consider the case of that $E_0^+(\widetilde{H})=\min\{E_0^+(\widetilde{H}), |E_0^-(\widetilde{H})|\}$. Then we assume here that $E_0^+ (\widetilde{H}) \notin \sigma_{\rm d}(\widetilde{H})$, i.e. $E_0^+(\widetilde{H}) \in \sigma_{\rm ess}(\widetilde{H})$. Then $E_0^+(\widetilde{H})^2 \in \sigma_{\rm ess}(\widetilde{H}^2)=[m^2,\infty)$ by the spectral mapping theorem. That is, $E_0(\widetilde{H}^2) \ge m^2$. However, this contradicts \eqref{eq:E0(H^2)}. Therefore $E_0^+(\widetilde{H})\in \sigma_{\rm d}(\widetilde{H})$. Similarly, we can conclude that $E_0^-(\widetilde{H}) \in \sigma_{\rm d}(\widetilde{H})$ if $|E_0^-(\widetilde{H})|=\min\{E_0^+(\widetilde{H}), |E_0^-(\widetilde{H})|\}$.
Thus $\widetilde{H}$ has a discrete positive or negative energy ground state, and therefore also $H$. 
\end{proof}


\section{Non-zero positive and negative energies}

In this section, we give a sufficient condition for which $E_0^\pm (H)\neq 0$ (see Theorem \ref{main2}). To obtain the condition, we prepare a few technical lemmas.

\begin{lem}\label{lem:MINENERGY}
Under the assumption of Theorem \ref{thm:mainFGS}, we obtain
$$
E_0(H^2) = \min \{E_0^+(H), |E_0^-(H)|\}^2.
$$
\end{lem}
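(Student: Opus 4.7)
The plan is to establish the asserted equality as two matching inequalities. The lower bound $E_0(H^2) \ge \min\{E_0^+(H), |E_0^-(H)|\}^2$ is already supplied by Lemma \ref{lem:lowest_energy} (which holds for any self-adjoint operator), so the task reduces to proving the reverse inequality $E_0(H^2) \le \min\{E_0^+(H), |E_0^-(H)|\}^2$.

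For the reverse direction, I would use the fact that $\sigma(H)$ is a closed subset of $\R$ since $H$ is self-adjoint, so both $\sigma(H) \cap [0,\infty)$ and $\sigma(H) \cap (-\infty, 0]$ are closed in $\R$; when non-empty they attain their infimum and supremum, respectively. Consequently, whenever $E_0^+(H)$ is finite one has $E_0^+(H) \in \sigma(H)$, and the spectral mapping theorem places $E_0^+(H)^2$ in $\sigma(H^2)$, yielding $E_0(H^2) \le E_0^+(H)^2$. The same reasoning applied to $E_0^-(H)$ gives $E_0(H^2) \le |E_0^-(H)|^2$ whenever $E_0^-(H)$ is finite. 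Taking the smaller of the available estimates produces the desired upper bound.

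The hypothesis of Theorem \ref{thm:mainFGS} is the bridge that makes the argument non-vacuous: its proof demonstrates that at least one of $E_0^\pm(H)$ lies in $\sigma_{\rm d}(H) \subset (-m,m)$ (by Proposition \ref{spectrumH}), and in particular is finite. The only real subtlety — more bookkeeping than genuine obstacle — is handling the convention $E_0^\pm(H) = \pm \infty$ from Remark \ref{rem1}: one must take the minimum only over the finite value(s), whose existence is precisely what the hypothesis of Theorem \ref{thm:mainFGS} guarantees.
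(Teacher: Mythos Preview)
Your proof is correct and follows the same two-inequality structure as the paper's, invoking Lemma~\ref{lem:lowest_energy} for the lower bound. For the upper bound, however, you take a slightly more elementary route. The paper appeals to the conclusion (in fact, to the proof) of Theorem~\ref{thm:mainFGS} to place the minimizing value $a=\min\{E_0^+(H),|E_0^-(H)|\}$ (or its negative) in $\sigma_{\rm d}(H)$, and hence $a^2\in\sigma_{\rm d}(H^2)\subset\sigma(H^2)$, giving $E_0(H^2)\le a^2$. You instead observe directly that closedness of $\sigma(H)$ forces $E_0^\pm(H)\in\sigma(H)$ whenever finite, so by the spectral mapping theorem their squares lie in $\sigma(H^2)$ and bound $E_0(H^2)$ from above. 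Your argument therefore does not actually need the hypothesis of Theorem~\ref{thm:mainFGS}: Remark~\ref{rem1} already guarantees that at least one of $E_0^\pm(H)$ is finite for any self-adjoint $H$, so you have in effect shown that the identity $E_0(H^2)=\min\{E_0^+(H),|E_0^-(H)|\}^2$ holds unconditionally. The paper's route, by contrast, yields the extra information $a^2\in\sigma_{\rm d}(H^2)$, though this is not used elsewhere.
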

\begin{proof}
We define $a= \min \{E_0^+(H), |E_0^-(H)|\}<\infty$. According to Lemma \ref{lem:lowest_energy}, we have already obtained $E_0(H^2)\ge a^2$. By Theorem \ref{thm:mainFGS}, we have $E_0^+(H) \in \sigma_{\rm d}(H)$ or $E_0^-(H) \in \sigma_{\rm d}(H)$, and hence $a^2 \in \sigma_{\rm d}(H^2)$. Therefore we get $a^2 \ge  E_0(H^2)$, as desired. 
\end{proof}

\begin{lem}\label{lem:spec_V}
Let us set
\begin{align*}
V_0=\left(\begin{array}{ccc}
      0 & W_0^*\\
     W_0 & 0
    \end{array}\right),
\end{align*}
where $W_0$ is a bounded operator on some Hilbert space. Then we have $\|V_0\|=\|W_0\|$.
\end{lem}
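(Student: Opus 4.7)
The plan is to exploit the anti-diagonal block structure and reduce the computation of $\|V_0\|$ to $\|W_0\|$ via either a direct action on a product Hilbert space or via $C^*$-identity applied to $V_0^2$.

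First, I would note that $V_0$ is self-adjoint, since the off-diagonal entries are adjoints of one another. Hence $\|V_0\|^2 = \|V_0^2\|$, and a direct block multiplication gives
\begin{equation*}
V_0^2 = \begin{pmatrix} W_0^* W_0 & 0 \\ 0 & W_0 W_0^* \end{pmatrix},
\end{equation*}
which is block-diagonal. Since the norm of a block-diagonal operator is the maximum of the norms of its blocks, we obtain $\|V_0^2\| = \max\{\|W_0^* W_0\|,\ \|W_0 W_0^*\|\} = \|W_0\|^2$ by the $C^*$-identity, yielding $\|V_0\| = \|W_0\|$.

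Alternatively (and perhaps more elementarily), I could argue directly. For any $(\psi_1,\psi_2)^\top$ in the product Hilbert space, one computes $V_0 (\psi_1,\psi_2)^\top = (W_0^*\psi_2,\, W_0\psi_1)^\top$, so
\begin{equation*}
\|V_0(\psi_1,\psi_2)^\top\|^2 = \|W_0^*\psi_2\|^2 + \|W_0\psi_1\|^2 \le \|W_0\|^2 \bigl(\|\psi_1\|^2 + \|\psi_2\|^2\bigr),
\end{equation*}
using $\|W_0^*\|=\|W_0\|$, which proves $\|V_0\| \le \|W_0\|$. For the reverse bound, test against vectors of the form $(\psi,0)^\top$ with $\|\psi\|=1$ to get $\|V_0(\psi,0)^\top\| = \|W_0\psi\|$, and take the supremum over $\psi$ to conclude $\|V_0\| \ge \|W_0\|$.

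There is no serious obstacle here; the statement is essentially a well-known fact about off-diagonal operator matrices, and the only thing to be careful about is distinguishing $W_0$ from $W_0^*$ when estimating and ensuring one attains the supremum. I would favor the second, direct argument since it avoids invoking self-adjointness and the $C^*$-identity, and makes the equality transparent from the action on simple test vectors.
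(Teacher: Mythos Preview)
Your proposal is correct, and your first approach---squaring $V_0$, observing $V_0^2 = W_0^*W_0 \oplus W_0 W_0^*$, taking the maximum of the block norms, and invoking $\|T^*T\|=\|TT^*\|=\|T\|^2$---is exactly the argument the paper gives. Your second, direct argument via the action on $(\psi_1,\psi_2)^\top$ and test vectors $(\psi,0)^\top$ is a valid and slightly more elementary alternative that the paper does not use.
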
 
\begin{proof}
One can see that $V_0^2=(W_0^\ast W_0) \oplus (W_0 W_0^\ast)$. Therefore $V_0$ is bounded and self-adjoint. By \cite[Proposition 4.4 in page 155]{A18}, we obtain $\|V_0^2\|= \max\{\|W_0^\ast W_0\|, \|W_0W_0^\ast\|\}$. Since it is known that $\|T^\ast T\|=\|TT^\ast\|=\|T\|^2$ for any bounded operators $T$ on a Hilbert space, we notice $\|V_0\|^2 =\|W_0\|^2$, as desired.
\end{proof}

We recall from Lemma \ref{rem:Vdiagonal} that for $\bn=(n_1,n_2,n_3)\in \mathcal{N}$,
$$
H^2=-\Delta+m^2+  \underbrace{\left(\begin{array}{ccc}
      0 & W^*\\
     W & 0
    \end{array}\right)}_{=V},
$$
where $W= - m \sum_{j=1}^3 ((D_1n_j) + i(D_2n_j)) \otimes \tau_j$. If $\bn=\bn(F)$ is a hedgehog ansatz with $F\in \calF$ and 
$\limsup_{r\to 0^+} \left|\frac{\sin F(r)}{r}\right|<\infty$, then 
$$
\widetilde{W}:=(U\otimes 1_2)W(U\otimes 1_2)^{-1}= - m \left(\begin{array}{ccc}
     \widetilde{R}_1 & \widetilde{R}_2\\
     \widetilde{R}_3 & -\widetilde{R}_1
    \end{array}\right),
$$
where 
\begin{align*}
\widetilde{R}_1&=-F'(r) \sin F(r) e^{i\varphi},\\
\widetilde{R}_2&= \left( F'(r)\cos F(r) + \frac{N}{r} \sin F(r)\right) e^{-i(N-1)\varphi},\\
\widetilde{R}_3&=  \left(F'(r)\cos F(r) - \frac{N}{r} \sin F(r) \right) e^{i(N+1)\varphi}.
\end{align*}
One can see that
\begin{align}\label{eq:norm_R}
\| \widetilde{W} \|= m  \sup_{r>0} \sqrt{\left|F'(r)^2 -\left(\frac{N}{r} \sin F(r)\right)^2 \right|}.
\end{align}

Finally, we restate Theorem \ref{main2} and present its proof.

\begin{thm}
Let us consider $F \in \calF$ with $\limsup_{r\to 0^+} \left|\frac{\sin F(r)}{r}\right|<\infty$.  If the assumption of Theorem \ref{thm:mainFGS} and
\begin{align}
\label{eq:condi_SUSY}
m>\sup_{r>0} \sqrt{\left|F'(r)^2 -\left(\frac{N}{r} \sin F(r)\right)^2 \right|}
\end{align}
hold, then $E_0^\pm (H) \neq 0$.
\end{thm}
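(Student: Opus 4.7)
The plan is to combine two observations: (a) under the assumption of Theorem \ref{thm:mainFGS}, Lemma \ref{lem:MINENERGY} gives the identity
\[
E_0(H^2)=\min\{E_0^+(H),|E_0^-(H)|\}^2,
\]
so it suffices to show $E_0(H^2)>0$; and (b) the extra hypothesis \eqref{eq:condi_SUSY} gives a strict upper bound on $\|V\|$ that is smaller than the constant $m^2$ appearing in $H^2=-\Delta+m^2+V$.

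For (b), I would first note that $V$ has the anti-diagonal block form recalled from Lemma \ref{rem:Vdiagonal}, so by Lemma \ref{lem:spec_V} we have $\|V\|=\|W\|$. Since $W$ and $\widetilde{W}$ are unitarily equivalent via $U\otimes 1_2$, this equals $\|\widetilde{W}\|$, and the explicit computation in \eqref{eq:norm_R} gives
\[
\|V\|=m\sup_{r>0}\sqrt{\left|F'(r)^2-\left(\tfrac{N}{r}\sin F(r)\right)^2\right|}.
\]
The hypothesis \eqref{eq:condi_SUSY} says precisely that this supremum is strictly less than $m$, so $\|V\|<m^2$.

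Next, using $-\Delta\ge 0$ and the operator form $H^2=-\Delta+m^2+V$ from Lemma \ref{lem:H^2}, for any $\psi\in\mathcal{D}(-\Delta)$,
\[
\langle\psi,H^2\psi\rangle \;\ge\; m^2\|\psi\|^2-\|V\|\,\|\psi\|^2\;=\;(m^2-\|V\|)\|\psi\|^2.
\]
Taking the infimum over unit vectors gives $E_0(H^2)\ge m^2-\|V\|>0$ by (b). Combining with (a),
\[
\min\{E_0^+(H),|E_0^-(H)|\}^2 \;=\; E_0(H^2)\;>\;0,
\]
so both $E_0^+(H)$ and $E_0^-(H)$ are non-zero (either finite and non-zero, or $\pm\infty$), which is the desired conclusion $E_0^\pm(H)\ne 0$.

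The steps are each quite short; the only point that requires a little care is lining up the norm identities $\|V\|=\|W\|=\|\widetilde{W}\|$ and verifying that $-\Delta\ge 0$ lets one drop the kinetic part in the quadratic form bound. The real content sits in Lemma \ref{lem:MINENERGY} (which uses Theorem \ref{thm:mainFGS} to guarantee the $a^2$ from Lemma \ref{lem:lowest_energy} is actually attained) and in the explicit norm formula \eqref{eq:norm_R}; both are already in hand, so the proof should be essentially a two-line assembly.
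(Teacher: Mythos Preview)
Your proof is correct and follows essentially the same approach as the paper: both reduce via Lemma \ref{lem:MINENERGY} to showing $E_0(H^2)>0$, then use $H^2=-\Delta+m^2+V$ together with $-\Delta\ge 0$, the norm identity $\|V\|=\|\widetilde W\|$ from Lemmas \ref{rem:Vdiagonal} and \ref{lem:spec_V}, and the explicit formula \eqref{eq:norm_R} to obtain $E_0(H^2)\ge m^2-\|V\|>0$ under \eqref{eq:condi_SUSY}. Your write-up merely spells out the unitary equivalence step $\|W\|=\|\widetilde W\|$ a bit more explicitly than the paper does.
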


\begin{proof}
By Lemma \ref{lem:spec_V} and \eqref{eq:norm_R}, we obtain
\begin{align*}
E_0(H^2) &= \inf_{\substack{\Phi\in\mathcal{D}(H^2)\\ \|\Phi\|=1}} \langle \Phi, H^2 \Phi \rangle\\
&\ge  m^2 +\inf_{\substack{\Phi\in\mathcal{D}(H^2)\\ \|\Phi\|=1}}\langle \Phi, V\Phi \rangle\\
&\ge m^2 - \|V\|\\
&=m^2 - \|\widetilde{W}\| \\
&=m^2 - m  \sup_{r>0} \sqrt{\left|F'(r)^2 -\left(\frac{N}{r} \sin F(r)\right)^2 \right|}.
\end{align*}
The assumption \eqref{eq:condi_SUSY} shows that $E_0(H^2)>0$.  By Lemma \ref{lem:MINENERGY}, the desired result holds. 
\end{proof}

We provide an example of $F\in \mathcal{F}$ such that  $H=H(\bn(F))$ has a discrete positive or negative energy ground state and $E_0^\pm (H)\neq 0$. More precisely, we find an $F\in \calF$ satisfying $\limsup_{r\to 0^+} \left|\frac{\sin F(r)}{r}\right|<\infty$ and the two conditions \eqref{eq:E_0condi} and \eqref{eq:condi_SUSY} by Theorems \ref{main1} and \ref{main2}. In order to find such an $F\in \mathcal{F}$, we define the following mollifier
\begin{align}\label{eq:testfunction}
f_0(r):= Cre^{-\frac{1}{1-r^2}}\chi_{\{0<r<1\}}(r) \quad \text{and} \quad C:= \left(\int_0^1 r^3 e^{-\frac{2}{1-r^2}}\dd r\right)^{-\frac{1}{2}},
\end{align}
where $\chi_A$ is the indicator function on $A \subset (0,\infty)$.
One can see that the function $f_0$ is in $C_0^\infty((0,\infty))$ and $\|f_0\|_2=1$.

\begin{ex}\label{ex:SUSY}
\label{ex:num}
We define
$$
F(r) = \frac{\pi}{r+1}, \qquad r>0.
$$ 
One can see that $\limsup_{r\to 0^+} \left|\frac{\sin F(r)}{r}\right|<\infty$. Consequently, if we put $N=1$, then
\[
\sup_{r>0}  \sqrt{ \left|F'(r)^2 -\left(\frac{N}{r} \sin F(r)\right)^2 \right|}  \approx 1.223.
\]
To achieve the condition \eqref{eq:condi_SUSY}, we give $m\ge 1.23$.

Next, we choose $m$ such that 
$$
\frac{1}{2}<\frac{9\pi}{200}m-\frac{5}{2},
$$ 
which implies $m>\frac{200}{3\pi}\approx 21.22$. Additionally, we pick up $\ell\in \Z$ such that $\ell^2+(\ell+1)^2 <\frac{9\pi}{100}m-5$.
The existence of such an $\ell$ is guaranteed due to the chosen value of $m$ and the fact that $\ell^2+(\ell+1)^2 \ge 1$ holds for all $\ell \in \Z$. For the function $f_0$ defined in \eqref{eq:testfunction}, 
\[
\left(- \frac{1}{2}\Delta_\ell - \frac{1}{2}\Delta_{\ell+1}\right) f_0 (r)= \left\{ \frac{-4(3r^2-2)}{(1-r^2)^4} + \frac{\ell^2+(\ell+1)^2-2}{2r^2}\right\} f_0(r),
\]
and therefore $\langle f_0, \left(- \frac{1}{2}\Delta_\ell - \frac{1}{2}\Delta_{\ell+1}\right) f_0\rangle_2  \le 15+3\ell^2 + 3(\ell+1)^2$. Moreover, if we put $t=1$, then
\begin{align*}
\langle f_0, mt F' \sin F f_0\rangle_2 
&= \left\langle f_0, -\frac{m\pi}{(r+1)^2} \sin \left(\frac{\pi}{r+1}\right) f_0 \right\rangle_2\\
&=-m\pi \underbrace{\left(\int_0^1 r^3 e^{-\frac{2}{1-r^2}}\dd r\right)^{-1} }_{\approx 270.23} \underbrace{\int_0^1 \frac{r^3}{(r+1)^2}\sin\left(\frac{\pi}{r+1}\right) e^{-\frac{2}{1-r^2}}\dd r}_{\approx 0.00135476}\\
&\le - \frac{27 \pi}{100}m.
\end{align*}
With the appropriate choice of values for $m$ and $\ell$, we ultimately obtain
\begin{align*}
\mathcal{E}_0\left(- \frac{1}{2}\Delta_\ell - \frac{1}{2}\Delta_{\ell+1}  + mtF'\sin F\right)  
&\le \left\langle f_0, \left(- \frac{1}{2}\Delta_\ell - \frac{1}{2}\Delta_{\ell+1} + mtF'\sin F\right)f_0 \right\rangle_2\\
&=15 + 3\ell^2 + 3(\ell+1)^2 - \frac{27 \pi}{100}m\\
&<0.
\end{align*}

The conditions \eqref{eq:E_0condi} and \eqref{eq:condi_SUSY} are satisfied, especially when $m\ge 21.23$.
\end{ex}

\subsection*{Acknowledgement} 

This work was started when S.O. was affiliated with Department of Mathematics, Tohoku University. D.F. and Y.U. thank Kazuyuki Wada for fruitful discussions. The authors are grateful to referees for very helpful comments to improve the readability of paper.

\bibliographystyle{amsplain}

\vspace{8mm}

\hspace{-6mm}{\bf Daiju Funakawa}\\
Department of Electronics and Information Engineering, Hokkai-Gakuen University, Sapporo, Hokkaido 062-8605, Japan
E-mail: funakawa@hgu.jp\\

\vspace{3mm}

\hspace{-6mm}{\bf Satoshi Okumura}\\
Yokohama, Kanagawa, Japan\\
E-mail: s.okumura.tohoku@gmail.com\\

\vspace{3mm}

\hspace{-6mm}{\bf Yuki Ueda}\\
Department of Mathematics, Hokkaido University of Education, 9 Hokumon-cho, Asahikawa, Hokkaido 070-8621, Japan\\
E-mail: ueda.yuki@a.hokkyodai.ac.jp

\end{document}